\documentclass{article}
\usepackage[hidelinks]{hyperref}
\usepackage[preprint]{neurips_2026}

\usepackage{algorithm}
\usepackage{algpseudocode}
\usepackage{amsmath,amssymb,amsfonts,amsthm}
\usepackage{mathtools}
\usepackage{tcolorbox}
\usepackage{algorithm}
\usepackage{algpseudocode}
\usepackage{float}
\usepackage{enumitem}
\usepackage{xcolor}

\usepackage{algpseudocode}

\algnewcommand{\Input}{\item[\textbf{Input:}]}
\algnewcommand{\Output}{\item[\textbf{Output:}]}

\usepackage{booktabs}
\usepackage{xcolor}
\usepackage{enumitem}
\usepackage{subcaption}
\usepackage{url}
\usepackage{graphicx}

\usepackage[T1]{fontenc}

\newtheorem{theorem}{Theorem}[section]
\newtheorem{lemma}[theorem]{Lemma}

\newtheorem{corollary}[theorem]{Corollary}

\newtheorem{remark}[theorem]{Remark}
\newtheorem{hypothesis}[theorem]{Hypothesis}

\newcommand{\norm}[1]{\|#1\|}

\title{Approximate Algorithms for Chamfer Distance Under Translation}

\author{
  \textbf{Gil Halevi} \\
  Columbia University \\
  \texttt{gh2707@columbia.edu}
  \and
  \textbf{Daniel Zhang}\\
  Columbia University \\
  \texttt{dtz2104@columbia.edu}
  \and
  \textbf{Jason Zhang}\\
  Columbia University \\
  \texttt{jz3960@columbia.edu}
}

\newcommand{\eps}{\varepsilon}
\newcommand{\op}{\operatorname}

\begin{document}

\maketitle

\begin{abstract}
Given two sets of points A and B, $|A| = m$, $|B| = n$, the Chamfer distance from $A$ to $B$ is defined as $\operatorname{CD}(A,B) = \sum_{a\in A} \min_{b\in B} d(a,b)$, where $d$ is a distance metric. Chamfer distance is a popular measure of dissimilarity between two sets of points that has seen increasing usage in computer vision and information retrieval as a substitute for the more computationally demanding Earth Mover's distance. We propose a new problem, Chamfer distance under translation, defined as $\operatorname{CDuT}(A,B) :=\min_{t\in \mathbb{R}^d} \operatorname{CD}(A+t,B)$, where $A+t$ denotes the translation of every point in $A$ by $t$. Chamfer distance under translation is valuable in cases where translations capture aspects of the data unlikely to be relevant for dissimilarity, such as temporal, spatial, or other semantic information. For Chamfer distance under translation, we provide four algorithms: (1) an exact quadratic time algorithm in one dimension, (2) a near quadratic time ($2+\varepsilon$)-approximation algorithm in higher dimensions, (3) a $(1+\varepsilon)$-approximation algorithm with running time $\mathcal{O}(mn^2\varepsilon^{-(d+1)})$, and (4) a near-quadratic time  $(1+\varepsilon)$-approximation algorithm for answering the decision version of $\operatorname{CDuT}$ given a separation assumption on $B$. We additionally explore the fine-grained complexity of $\operatorname{CDuT}$.

\end{abstract}

\section{Introduction}

For any two point sets $A, B \subset \mathbb{R}^d$ where $|A| = m$, $|B| = n$, the Chamfer distance from $A$ to $B$ is defined as
\[\operatorname{CD}(A,B) = \sum_{a\in A} \min_{b\in B}d(a,b).\]
where $d$ is some distance metric. Unless otherwise specified, assume $\ell_2$ distance.

$\operatorname{CD}$ has an exact algorithm in quadratic $\mathcal{O}(mn)$ running time: for each $a \in A$, find  the distance to the nearest neighbor $b \in B$ , and sum over all $a \in A$. In many applications, however, the absolute locations of the two point sets are not meaningful. For example, a template shape may appear anywhere in a larger image, a pattern of events may recur at different locations in spatial data, and two collections of deep representations may differ by a global offset while preserving similar relative structure. It is therefore natural to define a translation-invariant version of Chamfer distance, in which one is allowed to translate one point set before measuring its distance to another.

Let the Chamfer distance under translation from $A$ to $B$ be defined as
\[\operatorname{CDuT}(A,B) = \underset{t}{\min} \sum_{a\in A} \min_{b\in B}d(a+t,b).\]
where $d$ is some $\ell_p$ distance metric, such as Euclidean or Manhattan, and $t \in \mathbb{R}^d$ is a translation vector.

Unlike standard Chamfer distance, $\op{CDuT}$ does not have an immediate brute force algorithm: there are uncountably infinite possible translations $t \in \mathbb{R}^d$. Our main algorithmic approach is to replace this continuous search space by finite sets of candidate translations. We show these finite sets are sufficient to obtain an exact algorithm in one dimension and approximation algorithms for higher dimensions.
\subsection{Our Results}
Our contributions are as follows:
\begin{enumerate}[label=(\alph*)]
    \item For one-dimensional point sets, we give an exact algorithm for Chamfer distance under translation that runs in $\mathcal{O}(mn\log(mn))$ time.
    \item In higher dimensions, we also present an algorithm computing a $(2+\eps)$-approximation of $\operatorname{CDuT}$ with high probability in $\mathcal{O}\left(\frac{(m+n)n}{\eps^3}d\log(n)\log(\frac{mn}{\eps})\right)$ time and an algorithm for computing a $(2+\eps)c$ approximation with high probability in $\mathcal{O}\left(\frac{1}{\eps}mn^{1+\frac{1}{2c^2-1}}\log(n)d\right)$ time.
    \item In higher dimensions, we also present an algorithm for computing a $(1+\eps)$-approximation of $\operatorname{CDuT}$ with high probability that has worst-case running time $\mathcal{O}(mn^2\eps^{-(d+1)})$ .
    \item We give a $\tilde{\mathcal{O}}\left(mn^{1+\frac{1}{2c^2-1}}d\right)$ algorithm for answering the approximate decision problem: $\operatorname{CDuT} (A,B)\leq R$ vs. $\op{CDuT}(A,B)>R(1+\eps)$ with high probability. This algorithm relies on a separation assumption: $\forall b_1,b_2\in B, \norm{b_1-b_2}\geq (c+1)(1+\frac{2}{m})R$.
\end{enumerate}

\section{Related Work}
The study of translation-invariant distance measures under translation is well-motivated, and has been studied for a variety of other distance measures including Hausdorff distance under translation (\citet{agarwal-hausdorff-under-translation, fine-grained-hausdorff-under-translation}) and Fréchet distance under translation (\cite{filtser-frechet-distance-under-translation, bringmann-frechet-distance}). Closest to Chamfer distance under translation is Earth Mover's Distance under Translation, which has been studied in (\citet{fine-grained-complexity-emdut}). In this, they prove fine-grained complexity bounds that are equivalent to the fine-grained complexity bounds of Chamfer distance translation, and propose an exact one dimensional algorithm to compute Earth Mover's Distance under Translation in $\tilde{\mathcal{O}}(mn)$ which inspired our exact algorithm. Classical chamfer matching also considers transformations in computer vision, but usually in discretized image/template settings rather than the continuous finite-point-set approximation problem we study.

\section{Algorithms and Analyses}
In this section, we establish our algorithm for computing exact $\operatorname{CDuT}$ for pointsets $A,B \subset \mathbb{R}$ and our algorithms for computing approximate $\operatorname{CDuT}$ for pointsets $A,B \subset \mathbb{R}^d$ for $d > 1$. Additionally, we present an algorithm to solve the approximate decision problem of whether $\operatorname{CDuT}(A,B)\leq R$.

\subsection{Notation}
Let $\op{CD}(A+t,B)$ denote the Chamfer distance achieved by translating every point in $A$ by $t$: $\op{CD}(A+t, B)= \sum_{a\in A} \min_{b\in B}d(a+t,b)$. Let $t^*$ denote any translation that minimizes this Chamfer distance, $t^*\in \arg\min_{t\in \mathbb{R}^d}\op{CD}(A+t,B)$, and let $\mathrm{OPT}:=\operatorname{CD}(A+t^*,B) = \op{CDuT}(A,B)$ denote this minimum distance. We will use $b_{a+t}$ to denote any nearest neighbor to $a$ under translation $t$: $\arg\min_b\norm{a+t-b}$.

\subsection{Algorithm 1: Critical Points Sweeping Exact Chamfer Distance in One Dimension}
Algorithm 1 relies on the following lemma.
\begin{lemma}[Optimality of candidates in one dimension]
    For given pointsets $A, B$, let $t^* \in \mathbb{R}^1$ be an optimal translation for the Chamfer distance under translation, and define $T := \{b_i - a_j : a_i \in A, b_j \in B\}$. Then $\exists t \in T$ such that \[\op{CD}(A + t, B) = \op{CD}(A + t^*, B).\]
    \label{lem:opt-1d}
\end{lemma}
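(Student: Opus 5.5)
The plan is to study the one-variable function $f(t) := \op{CD}(A+t,B) = \sum_{a\in A} \min_{b\in B} |a+t-b|$ on $\mathbb{R}$. We will show that $f$ is continuous and piecewise linear, and that every breakpoint lies in $T$. (We read $T$ as the set of all differences $b-a$ with $a\in A$, $b\in B$; the index mismatch in the statement is immaterial.) Once this is established, a minimum of $f$ must be attained at a point of $T$, which is the claim.

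First, fix $a\in A$ and put $g_a(t) := \min_{b\in B}|t-(b-a)|$. This is the distance from $t$ to the finite set $T_a := \{b-a : b\in B\}\subseteq T$. Let $c_1<\dots<c_n$ be the sorted elements of $T_a$. Then $g_a$ behaves as follows:
\begin{itemize}
\item on $(-\infty,c_1]$ it is linear with slope $-1$;
\item on $[c_n,\infty)$ it is linear with slope $+1$;
\item on each gap $[c_k,c_{k+1}]$ it is a ``tent'', with slope $+1$ up to the midpoint $(c_k+c_{k+1})/2$ and slope $-1$ after it.
\end{itemize}
The crucial observation is that the tent peak is a \emph{local maximum}, i.e.\ a concave kink. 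So $g_a$ is linear on every interval that contains no point of $T_a$. Indeed, between two consecutive points of $T_a$, $g_a$ is concave: it is the minimum of the two linear functions $t-c_k$ and $c_{k+1}-t$.

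Summing over $a$ gives the structure of $f$. Let $\tau_1<\dots<\tau_N$ be the sorted distinct elements of $T$. On each closed interval $I$ between consecutive $\tau_i$ (and on the two unbounded rays), no $a$ has a point of $T_a$ in the interior of $I$. Hence each $g_a$ is concave on $I$, and so $f=\sum_a g_a$ is concave on $I$. A concave function on a closed bounded interval attains its minimum at an endpoint, and these endpoints lie in $T$.

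On the rays, $f$ has slope $-m$ on $(-\infty,\tau_1]$ and slope $+m$ on $[\tau_N,\infty)$. So $f$ is decreasing on the left ray and increasing on the right ray, and the minimum over each ray is at $\tau_1$ or $\tau_N$.

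To finish, take the optimal $t^*$. If it lies in some interval $[\tau_i,\tau_{i+1}]$, concavity there gives $\min\{f(\tau_i),f(\tau_{i+1})\}\le f(t^*)$. Optimality of $t^*$ then forces equality at some $t\in\{\tau_i,\tau_{i+1}\}\subseteq T$. If instead $t^*$ lies on a ray, monotonicity gives the same conclusion with $t=\tau_1$ or $t=\tau_N$.

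The main point requiring care is the concavity claim: a naive guess would place breakpoints at midpoints $(b_k+b_{k+1})/2 - a$, which are not in $T$. The argument depends on noticing that these midpoint kinks are concave, so they can never host a minimum that is not matched at a point of $T$.
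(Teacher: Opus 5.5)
Your proof is correct, and it reaches the conclusion by a somewhat different route than the paper's.

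The paper's proof works with first-order information:
\begin{itemize}
\item it notes that $\operatorname{CDpT}_{A,B}$ is piecewise linear, with kinks in $T_A$ (the points $b-a$ together with the midpoint translations $\frac{b_i+b_{i+1}}{2}-a$);
\item it uses coercivity to get a global minimizer;
\item it applies Fermat's theorem, so the minimizer is either a stationary point or a kink;
\item it extends a zero-slope segment to a kink of equal value;
\item it rules out kinks in $T_A\setminus T$ by showing $\operatorname{CDpT}_{A,B}'(t-\eps) > \operatorname{CDpT}_{A,B}'(t+\eps)$ there, so they cannot be local minima.
\end{itemize}

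You use the same key observation, that the midpoint kinks are concave. You package it as the statement that $f$ is concave on every cell of the partition of $\mathbb{R}$ induced by $T$. From there you need only two elementary facts: a concave function on a closed interval is minimized at an endpoint, and $f$ is monotone on the two unbounded rays.

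Your route gives a shorter and fully rigorous argument. It does not need a minimizer to exist in advance; in fact it proves $\min_{t\in T} f(t)=\inf_{t\in\mathbb{R}} f(t)$, so a minimizer exists. It also avoids the Fermat and flat-segment case analysis. Finally, it automatically handles a translation that is a midpoint kink for one $a$ and simultaneously a matching point $b-a'$ for another.

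The paper's derivative bookkeeping has its own payoff. It is exactly what Algorithm 1 exploits: the slope jumps by $+2$ at matching points and by $-2$ at midpoints during the sweep. So the paper's proof also justifies the sweep's update rule.
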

\begin{proof}[Proof sketch]
\begin{remark}
     Let \[T_A := \bigcup_{a \in A} \{b_i - a : i \in [n]\} \cup \left\{ \dfrac{b_{i + 1} + b_i}{2} - a : i \in [n - 1] \right\}.\]
     Let $\operatorname{CDpT}_{A, B} (t) = \operatorname{CD}(A + t, B)$. $\operatorname{CDpT}_{A, B}(t)$ is continuous over $\mathbb{R}^1$, and differentiable for all $t \notin T_A$. For $t \notin T_A$, \[\operatorname{CDpT}_{A, B}'' (t) = 0.\]
     \label{remark:piecelinear}
 \end{remark}

 By Remark \ref{remark:piecelinear}, all points of undifferentiability of $\operatorname{CDpT}_{A, B}$ are in $T_A$. For all the points $t$ in $T_A$ not in $T$ - we call these midpoint translations - for small enough $\eps > 0$, $\operatorname{CDpT}_{A, B}(t + \eps)$ or $\operatorname{CDpT}_{A, B}(t - \eps)$ is guaranteed to be smaller than $\operatorname{CDpT}_{A, B}(t)$, and thus $t$ is not the global minimum. Furthermore, by Remark \ref{remark:piecelinear}, any translation $t_0$ such that $\operatorname{CDpT}_{A, B}' (t_0) = 0$ - except for $t_0$ such that there is guaranteed to be a lower value of $\operatorname{CDpT}$ - must be sandwiched by two points in $T$. Therefore by Fermat's Theorem of Stationary Points, the global minimum must be in $T$. See Appendix \ref{appx:opt-1d} for full proof.
 \end{proof}

By Lemma \ref{lem:opt-1d}, an optimal translation can be found by checking every translation in the set $T$. This set has size $m n$, but checking each naively takes $\mathcal{O}(m^2 n^2)$ time. Instead, this time complexity can be improved by checking the superset $T_A$ in order. The main idea is that $\operatorname{CDpT}_{A, B}$ is piecewise linear, and the change in derivative of $\operatorname{CDpT}$ is easily computable when sweeping across $T_A$. This allows for fast in-order computation of $\operatorname{CDpT}_{A, B}$ for all $T_A$: Formally, let $t_i, t_j$ be points in $T_A$ such that $t_i < t_j$ and $\forall t \in T_A, t \notin (t_i, t_k)$, then $\forall t_k \in (t_i, t_j)$, \[\operatorname{CDpT}_{A, B} (t_j) = \operatorname{CDpT}_{A, B} (t_i) + (t_j - t_i) \cdot \operatorname{CDpT}'_{A, B} (t_k).\]

Evaluation of $\operatorname{CDpT}_{A, B} (t)$ relies solely on the type of translation $t \in T_A$ is.  For $t \in T_A$, for small enough $\eps > 0$, \[\operatorname{CDpT}_{A, B}' (t - \eps) = \operatorname{CDpT}_{A, B}' (t + \eps) - 2 I_\text{match} + 2 I_\text{mid},\]
Where $I_\text{match}$ is the number of pairs $a_j \in A, b_i \in B$ such that $t = b_i - a_j$, and $I_\text{mid}$ is the number of pairs $a_j, b_i$ such that $t = (b_{i + 1} + b_i) / 2 - a_j$. Both $I_\text{match}$ and $I_\text{mid}$ can be counted during the creation of $T_A$, allowing updates from adjacent points in $T_A$ in constant time. Finally, we remark that for $t_l < \min T_A$, $\operatorname{CDpT}_{A, B} '(t_l) = - m$.

The size of $T_A$ is less than $2 m n$, and since each update and shift between elements $T_A$ takes constant time, the time is bounded by finding, sorting, and iterating through $T_A$ and computing the initial Chamfer distance under the smallest translation. The latter takes $\mathcal{O}(mn)$ time. Computing and sorting $T_A$ takes $\mathcal{O}(m n \log (m n))$, and thus the time complexity of the algorithm is $\mathcal{O}(m n \log (m n))$. We show that this is optimal up to logarithmic factors in Section \ref{section:fine-grained-complexity}. The exact pseudo code for algorithm 1 is in Appendix \ref{appx:algo-1-pseudo-code}. Details on how this algorithm extends in higher dimensions for $\ell_1$ and $\ell_\infty$ norm to solve $\operatorname{CDuT}$ in $\mathcal{O}(m^{d + 1} n^{d + 1})$ can be found in Appendix \ref{appx:algo-1-extension}.

\subsection{Preliminaries for the Approximation Algorithms}

Algorithms 2 and 3 rely on the following lemmas.
\begin{lemma}[Upper bound for closest pair under optimal translation] 
\label{lem:closest-pair-upper-bound} Let $t^*$ be an optimal translation, and let \((a^*,b^*)\in A\times B\)
be a closest pair under \(t^*\); $(a^*,b^*) \in \arg\min_{(a,b)\in A \times B}\|a+t^*-b\|$.\\
Then:
\begin{enumerate}
    \item The closest pair satisfies
    \[\|(b^*-a^*)-t^*\| \leq \frac{\mathrm{OPT}}{m}.\]
    \item Furthermore, for every \(\eps \in (0,1]\), at least \(\lfloor m\eps/2\rfloor\) points  \(a\in A\) satisfy
    \[\|(b_{a+t^*}-a)-t^*\| \leq (1+\eps)\frac{\mathrm{OPT}}{m}.\]
\end{enumerate} 
\end{lemma}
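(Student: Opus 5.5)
The proof is an averaging argument (part 1) followed by a Markov-type counting argument (part 2), applied to the per-point distances at the optimal translation $t^*$. First I rewrite the quantity in the statement: for any $a\in A$ and $b\in B$ we have $\|(b-a)-t^*\| = \|a+t^*-b\|$. So it suffices to control the distances $\delta_a := \|a+t^*-b_{a+t^*}\| = \min_{b\in B}\|a+t^*-b\|$. By definition,
\[\sum_{a\in A}\delta_a = \op{CD}(A+t^*,B) = \mathrm{OPT}.\]

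For part 1, the closest pair $(a^*,b^*)$ under $t^*$ realizes $\min_{a}\delta_a$, since minimizing over pairs is the same as first minimizing over $b$ and then over $a$. A minimum of $m$ nonnegative numbers is at most their average. Hence $\|(b^*-a^*)-t^*\| = \min_a \delta_a \le \frac{1}{m}\sum_a \delta_a = \mathrm{OPT}/m$.

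For part 2, I first dispose of the degenerate case $\mathrm{OPT}=0$. Then every $\delta_a=0$, so all $m \ge \lfloor m\eps/2\rfloor$ points qualify.

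Assume now $\mathrm{OPT}>0$, and let $k$ be the number of "bad" points, those with $\delta_a > (1+\eps)\mathrm{OPT}/m$. Since all $\delta_a\ge 0$,
\[k(1+\eps)\frac{\mathrm{OPT}}{m} < \sum_{a \text{ bad}}\delta_a \le \mathrm{OPT}\]
whenever $k\ge 1$. This gives $k < m/(1+\eps)$, and the bound holds trivially if $k=0$. The number of good points is therefore at least
\[m-k > m - \frac{m}{1+\eps} = \frac{m\eps}{1+\eps} \ge \frac{m\eps}{2},\]
where the last step uses $\eps\le 1$. Since the count of good points is an integer, it is at least $\lfloor m\eps/2\rfloor$. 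Each good point $a$ satisfies $\|(b_{a+t^*}-a)-t^*\| = \delta_a \le (1+\eps)\mathrm{OPT}/m$, which is the claim.

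There is no real obstacle here. The only points needing care are the strict-inequality bookkeeping in the Markov step, which is what keeps the bound valid when $m\eps/(1+\eps)$ is itself an integer, and handling $\mathrm{OPT}=0$ separately so that we never divide by $\mathrm{OPT}$. I would also note that the argument holds for any choice of nearest neighbor $b_{a+t^*}$ when ties occur, since only the value $\delta_a$ enters.
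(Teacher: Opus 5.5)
Your proof is correct and follows the paper's argument: part 1 is the same min-versus-average bound, and part 2 is the paper's Markov step over a uniformly random $a\in A$, rewritten as a direct count of bad points. Your separate treatment of $\mathrm{OPT}=0$ covers a case the paper's Markov step glosses over. The strict inequality you emphasize is not actually needed, since the non-strict count $m-k\ge m\eps/(1+\eps)\ge m\eps/2\ge\lfloor m\eps/2\rfloor$ already suffices.
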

\begin{proof}
\begin{align*}
\mathrm{OPT} = \sum_{a\in A} \|a + t^*- b_{a+t^*}\| &\geq m \|a^*+t^*-b^*\|\\
\implies \|(b^* - a^*) -t^* \| &\leq \frac{\mathrm{OPT}}{m}
\end{align*}
For the second claim, sample $a\sim U[A]$ uniformly at random. Then
\[\mathbb{E}_{a\sim U[A]}\big[\norm{(b_{a+t^*}-a)-t^*} \big]=\frac{1}{m}\sum_{a \in A} \norm{a+t^*-b_{a+t^*}} =\frac{\mathrm{OPT}}{m}.\]
By Markov's inequality, for $\eps\in (0,1]$, we have
\[\Pr_{a\sim U[A]}\left[\norm{(b_{a+t^*}-a)-t^*} \leq (1+\eps)\frac{\mathrm{OPT}}{m}\right]\geq 1- \frac{1}{1+\eps} = \frac{\eps}{1+\eps}\geq \frac{\eps}{2}.\]
Thus, at least $\lfloor m\frac{\eps}{2}\rfloor$ points $a\in A$ satisfy $\|(b_{a+t^*}-a)-t^*\| \leq (1+\eps)\frac{\mathrm{OPT}}{m}$
\end{proof}
Importantly, this shows that the finite candidate set \(\{b_i-a_j : a_j\in A,\ b_i\in B\}\) of size $mn$ contains a translation close to an optimal translation. Moreover, for \(\eps\in(0,1]\), there are at least \(\lfloor m\eps/2\rfloor\) choices of \(a\in A\) with a candidate translation \(b-a\) within distance \((1+\eps)\mathrm{OPT}/m\) of \(t^*\).
\begin{lemma}[Approximation factor of Chamfer distance under candidate translations]
\quad
\begin{enumerate}
\item For every translation $t$ with $\norm{t-t^*}\leq \frac{k}{m}\mathrm{OPT}$ to some $t^*$, \[\op{CD}(A+t,B)\leq (1+k)\mathrm{OPT}.\]
\item \[\mathrm{OPT} \leq \min_{a\in A,b\in B}\operatorname{CD}(A+b-a, B)\leq 2\mathrm{OPT}.\]
\item For at least $\lfloor\ m\eps/2\rfloor$ points $a\in A$, \[\mathrm{OPT}\leq \min_b\operatorname{CD}(A+b-a,B)\leq (2+\eps)\mathrm{OPT}.\]
\end{enumerate}
\label{lem:appx-chamfer-translations}

\end{lemma}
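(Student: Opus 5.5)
The plan is to prove part 1 with a pointwise triangle inequality and then derive parts 2 and 3 from part 1 and Lemma~\ref{lem:closest-pair-upper-bound}.

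\textbf{Part 1.} Fix $t$ with $\norm{t-t^*}\leq \frac{k}{m}\mathrm{OPT}$. For each $a\in A$, use $b_{a+t^*}$ as a (possibly suboptimal) witness for $a+t$:
\[
\min_b \norm{a+t-b}\leq \norm{a+t-b_{a+t^*}}\leq \norm{a+t^*-b_{a+t^*}}+\norm{t-t^*}.
\]
Summing over the $m$ points of $A$ gives
\[
\op{CD}(A+t,B)\leq \mathrm{OPT}+m\cdot\frac{k}{m}\mathrm{OPT}=(1+k)\mathrm{OPT}.
\]
This uses only the triangle inequality, so it holds for any $\ell_p$ metric.

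\textbf{Part 2.} The lower bound $\mathrm{OPT}\leq \op{CD}(A+b-a,B)$ holds for every pair $(a,b)$, since $\mathrm{OPT}$ is the minimum over all translations. For the upper bound, take the closest pair $(a^*,b^*)$ under $t^*$. Lemma~\ref{lem:closest-pair-upper-bound}(1) gives $\norm{(b^*-a^*)-t^*}\leq \mathrm{OPT}/m$. Applying Part 1 with $t=b^*-a^*$ and $k=1$ yields $\op{CD}(A+b^*-a^*,B)\leq 2\mathrm{OPT}$, which bounds the minimum over all pairs.

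\textbf{Part 3.} The lower bound again holds trivially. Lemma~\ref{lem:closest-pair-upper-bound}(2) provides at least $\lfloor m\eps/2\rfloor$ points $a$ with
\[
\norm{(b_{a+t^*}-a)-t^*}\leq (1+\eps)\frac{\mathrm{OPT}}{m}.
\]
For each such $a$, apply Part 1 with $t=b_{a+t^*}-a$ and $k=1+\eps$. This gives $\op{CD}(A+b_{a+t^*}-a,B)\leq(2+\eps)\mathrm{OPT}$, and minimizing over $b$ can only lower the value.

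None of the steps is a real obstacle. The only care needed is that in Part 1 the nearest neighbour of $a+t$ may differ from $b_{a+t^*}$. This is harmless, because the minimum over $b$ is bounded by any particular choice of $b$.
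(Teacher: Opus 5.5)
Your proof is correct and matches the paper's argument: Part 1 uses the same triangle-inequality witness $b_{a+t^*}$, and Parts 2 and 3 apply that bound through Lemma~\ref{lem:closest-pair-upper-bound} with $k=1$ and $k=1+\eps$. The only difference is that in Part 2 the paper also notes that $a^*$ lands exactly on $b^*$, which gives the slightly sharper bound $(2-\frac{2}{m})\mathrm{OPT}$; the statement does not need this.
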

\begin{proof}[Proof sketch]
In the worst case, $\mathrm{CD}(A+t,B)$ involves moving every $a$ to its position under $t^*$, and then moving it to $b_{a+t^*}$. By Lemma \ref{lem:closest-pair-upper-bound}, there is at least one candidate translation that achieves $\norm{t_c-t^*}\leq \frac{\mathrm{OPT}}{m}$, and at least $\lfloor \frac{\eps m}{2} \rfloor$ that achieve $\norm{t_c-t^*}\leq (1+\eps)\frac{\mathrm{OPT}}{m}$, and thus the minimal chamfer distance at any candidate translation is upper bounded by $2\mathrm{OPT}$ and $(2+\eps)\mathrm{OPT}$ respectively.
\end{proof}

\subsection{Algorithm 2: Near-Quadratic Time for Computing 2-Approximate Solution}
In this section we present an $\mathcal{O}(\frac{(m+n)n}{\eps^3}d\log(n)\log(\frac{mn}{\eps}))$ time algorithm for computing a $(2+\eps)$- approximation of $\operatorname{CDuT}$ (variant 1), and an $\mathcal{O}(\frac{1}{\eps}mn^{1+\frac{1}{2c^2-1}}d\log(n))$ time algorithm for computing a $(2+\eps)c$-approximation of $\operatorname{CDuT}$ (variant 2), for $c>1$ and $\eps\in (0,1)$, both with high probability. 

The basic approach is to sample $\mathcal{O}(\frac{1}{\eps})$ points $a_1,...,a_k$ uniformly at random from $A$. For each of them, for each $b\in B$, compute an approximation for $\operatorname{CD}(A+b-a_i,B)$, and return the minimum such approximation found.

There are two different approaches for computing $\operatorname{CD}(A+b-a_i,B)$. One of them, which we'll call variant 1, is to call the algorithm from \citet{near-linear-time-algo-chamfer-distance} for a $1+\frac{\eps}{4}$ approximation of $\operatorname{CD}$ in time $\mathcal{O}(\frac{1}{\eps^2}n\log(n)d)$ as a subroutine with probability $\geq 0.99$, which can be boosted to $1-\frac{0.05}{kn}$ by repeating a logarithmic number of times and taking the median. This leads to total time $\mathcal{O}(\frac{(m+n)n}{\eps^3}d\log(n)\log(\frac{mn}{\eps}))$ for a $(2+\eps)$-approximation with probability $\geq 0.9$, by running it a logarithmic number of times.

Since in many contexts $n \gg m$, we present an alternative method, variant 2, based on approximate nearest neighbor in order to reduce the exponent on $n$. As described in further detail in Appendix \ref{appx:nn-data-structure}, during preprocessing we build a multi-scale LSH data structure on $B$ in space $\mathcal{O}(n^{1+\frac{1}{2c^2-1}}\log(n))$ and time $\mathcal{O}(n^{1+\frac{1}{2c^2-1}}\log(n))$ for giving $c$-approximations to nearest neighbor queries in time $\mathcal{O}(n^{\frac{1}{2c^2-1}}\log(n)d)$. For each translation $t$, for each point $a_i\in A$, we query the data structure with vector $a_i+t$ and store the approximate distance to its nearest neighbor $\Delta_i= \norm{\hat{b}_{a_i+t}-a_i}$. We use the $\sum_{i\in [m]}\Delta_i$ as an approximation of the Chamfer distance at $t$, and return the minimum such sum over all $\mathcal{O}(\frac{1}{\eps})$ translations.

\subsection{Algorithm 3: Sampled Local-Net $(1+\eps)$-Approximation for $\operatorname{CDuT}$}
\begin{figure}[t]
\centering
\begin{tcolorbox}[title=\textsc{$(1+\eps)$-ChamferDistanceUnderTranslation}]

\begin{algorithmic}
\State Let \(Q\) be a uniformly sampled random subset of \(A\) of size \(
k=\left\lceil \frac{2}{\eps}\log\frac{1}{\delta}\right\rceil \)
\State Let \(T_{samp} =  \{b_i-a_j : a_j\in Q,\ b_i\in B\}\)
\State Let \(u = \min_{t_c\in T_{samp}}\operatorname{CD}(A+t_c,B)\)
\State Set \(R = (1+\eps)\frac{u}{m}\) and \(\rho = \frac{\eps u}{3m}\)\\
\State For each $t_c \in T_{samp}$:
\State \qquad Construct a local $\rho$-net $N_{t_c}$ of the ball $B(t_c,R)$
\State \qquad Compute $\operatorname{CD}(A+\hat t,B)$ for every $\hat t \in N_{t_c}$

\State Let \[N := \bigcup_{t_c \in T_{samp}} N_{t_c}\]
\State Return \[
\hat t = \arg\min_{s \in N} \operatorname{CD}(A+ s, B),
\widehat{\mathrm{OPT}} =  \operatorname{CD}(A+ \hat t, B)
\]

\end{algorithmic}

\end{tcolorbox}
\caption{The $(1+\eps)$-ChamferDistanceUnderTranslation Algorithm}
\label{fig:chamfer-1+e-appx}
\end{figure}
In this section we present an algorithm for computing a $(1+\eps)$- approximation for $\op{CDuT}$. Pseudocode for the algorithm can be seen in Figure \ref{fig:chamfer-1+e-appx}.

The algorithm is based on Lemma~\ref{lem:closest-pair-upper-bound}, which shows that the finite set of candidate translations contains points close to an optimal $t^*$. We therefore search local $\rho$-nets of balls centered at these candidate translations. Taking the nets sufficiently fine guarantees that at least one net point lies within $\mathcal{O}(\frac{\eps\op{OPT}}{m})$ of $t^*$, and this proximity is enough to ensure that the resulting Chamfer cost is at most $(1+\eps)\op{OPT}$.
\begin{theorem}[Sampled Local-Net $(1+\eps)$-Approximation for $\operatorname{CDuT}$]
\label{thm:local-net-algo-theorem}
Local-Net-$\operatorname{CDuT}$-Estimate runs in time $\mathcal{O}(mn^2\eps^{-(d+1)})$ and returns a translation and value estimate $(\hat t,\widehat{\op{OPT}})$ such that with high probability,
\[\op{CDuT}(A,B) \leq \widehat{\op{OPT}} \leq (1 + \eps)\op{CDuT}(A,B),\]

when the underlying metric is any $\ell_p$ metric.

\end{theorem}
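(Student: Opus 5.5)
The plan has three parts: a lower bound, a probabilistic upper bound, and a running-time count.

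The lower bound $\op{CDuT}(A,B)\le \widehat{\op{OPT}}$ is immediate. The algorithm returns $\op{CD}(A+\hat t,B)$ for an actual translation $\hat t$, and $\mathrm{OPT}$ is the minimum over all translations.

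For the upper bound I argue in three steps.

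\emph{Step 1 (sampling).} By the second part of Lemma~\ref{lem:closest-pair-upper-bound} (applied with $\eps$, or with $\eps/3$ to leave room in the constants), at least a $\frac{\eps}{2}$ fraction of $A$ consists of ``good'' points $a$, meaning $\norm{(b_{a+t^*}-a)-t^*}\le(1+\eps)\frac{\mathrm{OPT}}{m}$. The sample $Q$ has size $k=\lceil\frac{2}{\eps}\log\frac1\delta\rceil$, so it misses all good points with probability at most $(1-\eps/2)^k\le\delta$. With probability at least $1-\delta$, some $t_c=b_{a+t^*}-a\in T_{samp}$ therefore satisfies $\norm{t_c-t^*}\le(1+\eps)\mathrm{OPT}/m$.

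\emph{Step 2 (the threshold $u$).} Part 1 of Lemma~\ref{lem:appx-chamfer-translations} with $k=1+\eps$ gives $\mathrm{OPT}\le u\le(2+\eps)\mathrm{OPT}$. Since $u\ge\mathrm{OPT}$, we get $R=(1+\eps)u/m\ge(1+\eps)\mathrm{OPT}/m\ge\norm{t_c-t^*}$, so $t^*\in B(t_c,R)$. The $\rho$-net $N_{t_c}$ then contains a point $\hat s$ with $\norm{\hat s-t^*}\le\rho=\frac{\eps u}{3m}\le\frac{\eps(2+\eps)}{3}\frac{\mathrm{OPT}}{m}\le\eps\frac{\mathrm{OPT}}{m}$ for $\eps\le1$.

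\emph{Step 3 (the Lipschitz bound).} Part 1 of Lemma~\ref{lem:appx-chamfer-translations}, now with $k=\eps$, gives $\op{CD}(A+\hat s,B)\le(1+\eps)\mathrm{OPT}$. The underlying fact is that $t\mapsto\op{CD}(A+t,B)$ is $m$-Lipschitz in the chosen norm: each term $\min_b\norm{a+t-b}$ is $1$-Lipschitz by the triangle inequality, which holds for every $\ell_p$ norm. This is why the result holds for any $\ell_p$ metric; it requires only that $\rho$-nets and balls are taken in that same norm. Since $\hat t$ minimizes over $N\ni\hat s$, we conclude $\widehat{\op{OPT}}\le(1+\eps)\mathrm{OPT}$. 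Taking $\delta$ polynomially small yields the high-probability guarantee.

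For the running time: $|T_{samp}|=kn=O(n\eps^{-1}\log\frac1\delta)$. A $\rho$-net of a radius-$R$ ball in $\mathbb{R}^d$ (for instance a scaled grid, adjusted for the $\ell_p$ norm) has size $O((R/\rho)^d)=O((3(1+\eps)/\eps)^d)=O(\eps^{-d})$, with constants depending on $d$ absorbed. Each Chamfer evaluation costs $O(mn)$, as does computing $u$. The total is $O(kn\cdot\eps^{-d}\cdot mn)=O(mn^2\eps^{-(d+1)})$, where the $\log\frac1\delta$ factor is treated as a constant for constant failure probability.

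The main obstacle is bookkeeping rather than ideas. First, the net must be certified to cover $t^*$: this needs $R\ge\norm{t_c-t^*}$, which relies on $u\ge\mathrm{OPT}$ and not merely on an upper bound for $u$. Second, $\rho$ must be at most $\eps\mathrm{OPT}/m$ even though $u$ may be as large as $(2+\eps)\mathrm{OPT}$; this is the source of the factor $3$ in $\rho$ and the restriction $\eps\le1$. Third, the $d$-dependent constant in the net size, $(c_p\cdot 3(1+\eps)/\eps)^d$, must be hidden in the $O(\cdot)$.
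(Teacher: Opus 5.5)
Your proposal is correct and follows the paper's proof step for step: the Markov-based good set is hit by the sample $Q$, the bracket $\mathrm{OPT}\le u\le(2+\eps)\mathrm{OPT}$ places $t^*$ in $B(t_c,R)$ and gives $\rho\le\eps\,\mathrm{OPT}/m$, the triangle-inequality ($m$-Lipschitz) bound finishes correctness for any $\ell_p$ norm, and the time is the count $|T_{samp}|\cdot\mathcal{O}(\eps^{-d})\cdot\mathcal{O}(mn)$. The only differences are cosmetic: the paper carries separate parameters $\gamma$ and $h\ge 2+\gamma$ before specializing to $\gamma=\eps$, $h=3$, which is where your restriction $\eps\le 1$ comes from, and both arguments share the same slack that taking $\delta$ polynomially small, to get ``high probability'', would add a $\log n$ factor to the stated running time.
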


We state Algorithm 3 with exact candidate evaluation. Near-linear randomized estimators for fixed-translation Chamfer distance could be substituted \citet{near-linear-time-algo-chamfer-distance}, but their high-probability guarantees would need to hold uniformly over all net candidates, introducing additional amplification and variance-dependent factors.
\begin{proof}[Proof sketch]
Let \(t^*\) be an optimal translation and let $\mathrm{OPT}=\op{CDuT}(A,B)$. By Lemma~\ref{lem:closest-pair-upper-bound}, there are many candidate translations \(b-a\) within distance \((1+\gamma)\mathrm{OPT}/m\) of \(t^*\). Therefore, if \(Q\subseteq A\) is sampled uniformly without replacement with \(|Q|=\lceil (2/\gamma)\log(1/\delta)\rceil\), then with probability at least \(1-\delta\), the sampled candidate set \(T_{\mathrm{samp}}=\{b-a:a\in Q,\ b\in B\}\) contains some \(t_c\) satisfying \(\|t_c-t^*\|\le (1+\gamma)\mathrm{OPT}/m\). Let \(u=\min_{t_c\in T_{\mathrm{samp}}}\op{CD}(A+t_c,B)\). On this event, Lemma \ref{lem:appx-chamfer-translations} gives \(\mathrm{OPT}\le u\le (2+\gamma)\mathrm{OPT}\).

Set \(R=(1+\gamma)u/m\) and \(\rho=\eps u/(hm)\) for \(h\ge 2+\gamma\). The successful candidate satisfies \(\|t_c-t^*\|\le (1+\gamma)\mathrm{OPT}/m\le R\), so the ball \(B(t_c,R)\) contains \(t^*\), and its \(\rho\)-net contains some \(s\) with \(\|s-t^*\|\le \rho\). Since \(\|s-t^*\|\le\rho\), each nearest-neighbor distance in the Chamfer sum increases by at most \(\rho\), so \(\op{CD}(A+s,B)\le \op{CD}(A+t^*,B)+m\rho=\mathrm{OPT}+m\rho=\mathrm{OPT}+\eps u/h\le (1+\eps)\mathrm{OPT}\), so minimizing over all net points gives the desired approximation. Finally, \(|T_{\mathrm{samp}}|=\mathcal{O}((n/\gamma)\log(1/\delta))\), each local net has size \((R/\rho)^d=\mathcal{O}(\eps^{-d})\), and each exact Chamfer evaluation costs \(\mathcal{O}(mn)\), giving total time \(\mathcal{O}(mn^2\gamma^{-1}\eps^{-d}\log(1/\delta))\), which becomes \(\mathcal{O}(mn^2\eps^{-(d+1)})\) for \(\gamma=\eps\) and constant \(\delta\). Refer to Appendix \ref{appx:proof-of-local-net-theorem} for the full proof.
\end{proof}

Note that if the candidate translations are well clustered in translation space, the dependence on the number of sampled candidates $|T_{samp}|$ can be significantly reduced. For example, suppose that the sampled translations lie in a region whose $R$-covering number is $K$. Then the number of net points needed for the union of the search balls is on the order of $K\eps^{-d}$ rather than $|T_{samp}|\eps^{-d}$. This improves the bound when $K\ll|T_{samp}|$. This is precisely the regime in which $A$ and $
B$ are similar up to translation: many sampled offsets $b-a$ concentrate in the same region of translation space, causing their local search balls to overlap. More discussion can be found in Appendix \ref{appx:algo-3-instance-sensitive-refinement}.

The exponential dependence on $d$ is consistent with our fine-grained lower bound. In Section \ref{section:fine-grained-complexity}, we show that, assuming ETH, there is no $(\frac{n}{\eps})^{o(d)}$-time $(1+\eps)$-approximation algorithm for $\op{CDuT}(A,B)$. Thus, substantially improving the $\eps^{-d}$ dependence in Algorithm 3 would require breaking standard complexity assumptions.

\subsection{Algorithm 4: Answering Decision Problem up to $(1+\eps)$ Given Separation Assumption}
Algorithm 4 is specifically for the decision version of the problem: Given a set of points $A,B$, is $\op{CDuT}(A,B)\leq R$? It returns YES if $\op{CDuT}(A,B)\leq R$, NO if $\op{CDuT}(A,B)> R(1+\eps)$, and either if $ R< \op{CDuT}(A,B) \leq R(1+\eps)$.

It relies on an assumption: $\forall b_1,b_2 \in B, \norm{b_1-b_2}\geq (c+1)(1+\frac{2}{m})R$. We note that this is a strong assumption, though not entirely impractical. One can think of Algorithm 4 as a form of soft fingerprinting, designed for telling whether one dataset is found almost identically in another after some translation; the interpoint distances in the bigger dataset should dwarf the desired test threshold. Given these assumptions, we present an $\mathcal{O}(mnd(n^{\frac{1}{2c^2-1}}\log(n)+\log^{3}(\frac{mR}{\eps})))=\tilde{\mathcal{O}}(mn^{1+\frac{1}{2c^2-1}}d)$ time algorithm for solving this decision version of Chamfer distance under translation. Note that $c$ here just changes the point separation threshold, not the approximation factor of the algorithm.

The algorithm is very similar to variant 2 of Algorithm 2. First, we sample $\mathcal{O}(1)$ points $\tilde{a}_{1,...k}\in A$. For every such $\tilde{a}_i$, for every $\tilde{b}_j\in B$, consider the translation $\tilde{t}=\tilde{b}_j-\tilde{a}_i$, as in algorithm 2. At every such translation $\tilde{t}$, we compute the set of difference vectors under its closest point assignment.
\[D_{\tilde{t}}=\{\vec{\Delta}_i|i\in [m]\},\quad \vec{\Delta}_i=b_{a_i+\tilde{t}}-a_i \quad \forall i\in [m].\]
The nearest neighbors $b_{a_i+\tilde{t}}$ are found by approximate nearest neighbor. We then take all these difference vectors and compute an approximation $\hat{\mu}_1$ of their geometric median $\mu_1\in \arg\min_{t\in \mathbb{R}^d} \sum_i\norm{\vec{\Delta_i}-t}$ as well as the total distance to it $s_{\hat{\mu}_1}(D_{\tilde{t}})=\sum_{i\in [m]}\norm{\vec{\Delta}_i-\hat{\mu}_1}$. This geometric median is often in practice found using Weiszfeld's algorithm (\citet{somme-des-distance}), but there is also a provably efficient approximation algorithm by Cohen et al. \citet{cohen-geometric-mean} that takes time $\mathcal{O}(md\log^3(\frac{mR}{\eps}))$ and returns a point $\hat{\mu}_1$ that is $\frac{R \eps}{m}$-close to $\mu_1$. We return YES if one of these approximate medians yields $ s_{\hat{\mu}_1}\leq R(1+\eps)$, and NO otherwise.

This algorithm's correctness relies on the fact that under the separation assumption and $\mathrm{OPT}\leq R$, the nearest neighbors under one $\tilde{t}$ found by approximate nearest neighbor will be exactly the nearest neighbors under $t^*$. For this, we introduce the concept of alignment and $c$-uniqueness.
\begin{lemma}[Alignment of close points and c uniqueness under separation assumption]
Suppose $\mathrm{OPT}\leq R$ and that $
\forall b_1,b_2\in B,\norm{b_1-b_2}\geq (c+1)(1+\frac{2}{m})R$. Call a translation $t$ aligned with $t'$ if $\forall i \in [m],b_{a_i+t}=b_{a_i+t'}$. Call $t$ $c$-unique if $b_{a_i+t}$ is the nearest neighbor to $a_i+t$ by a factor of c, ie $\forall b'\neq b_{a_i+t}\in B,\norm{a_i+t-b'}\geq c\norm{a_i+t-b_{a_i+t}}$. Then:
\begin{center}
$\norm{t-t^*}\leq \frac{2R}{m} \implies t$ is aligned with $t^*$ and $c$-unique.
\end{center}
\begin{proof}[Proof sketch]
Since any two $b\in B$ are separated by greater than $2R$, and since the distance $\norm{b_{a_i+t^*}-a_i}\leq R$, translating each $a_i$ by a small amount is unable to change which $b\in B$ is closest to $a_i$, which proves alignment. $c$-uniqueness is the natural extension when the distances are greater than $(1+c)R$.
\end{proof}

\label{lem:c-alignment-under-separation}
\end{lemma}
    
We now want some way of relating the total distance of a set of differences to its geometric median to Chamfer distance. For this we introduce the following lemmas.

\begin{lemma}[Correspondence between total distance and Chamfer distance]
For a set of points $P=\{p_1,...,p_k\}\in \mathbb{R}^d$, define their total distance to a point p as $s_{p}(P)=\sum_i \norm{p_i-p}$. Consider the set of differences induced by a point assignment from $A$ to $B$, $D=\{b_{i}-a_i|i=1,...,m\}$ for some $b_1,...,b_m\subset B$. Then,
\begin{enumerate}
    \item \[\forall t\in \mathbb{R}^d, s_{t}(D)\geq \op{CD}(A+t,B).\]
    \item 
Define the set of differences induced by a closest point assignment under some translation $t$ as $D_t=\{b_{a_i+t}-a_i|i=1,...,m\}$. Then,
\[s_t(D_t)=\op{CD}(A+t,B).\]
\end{enumerate}
\begin{proof}
\[s_t(D)=\sum_{i\in [m]} \norm{(b_i-a_i)-t}=\sum_{i\in [m]} \norm{a_i+t-b_i}\geq \sum_{i\in [m]} \min_b\norm{a_i+t-b}=\operatorname{CD}(A+t,B)\]
When $D=D_t$, the inequality becomes equality.
\end{proof}
\label{lem:total-distance-CD}
\end{lemma}
\begin{corollary}
    If $t'$ is aligned with $t$, $s_{t'}(D_t)=\op{CD}(A+t',B)$
\label{cor:alignment-CD}
\end{corollary}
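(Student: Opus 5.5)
The plan is to apply part 2 of Lemma~\ref{lem:total-distance-CD} at the translation $t'$ to the set $D_{t'}$, and then use alignment to show that $D_{t'}$ and $D_t$ are the same set.

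First, recall the definitions. The set $D_t=\{b_{a_i+t}-a_i \mid i\in[m]\}$ is indexed by $i$. The statement $s_{t'}(D_t)=\sum_{i\in[m]}\norm{(b_{a_i+t}-a_i)-t'}$ should therefore be read as a sum over the indexed family, counting multiplicities; this is how $s$ is used in Lemma~\ref{lem:total-distance-CD}.

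Next, unfold the definition of alignment. The translation $t'$ is aligned with $t$ precisely when $b_{a_i+t'}=b_{a_i+t}$ for every $i\in[m]$. This is the relation stated in Lemma~\ref{lem:c-alignment-under-separation}, with the roles of the two translations named as in the corollary. Substituting termwise gives
\[
D_t=\{b_{a_i+t}-a_i\}_{i\in[m]}=\{b_{a_i+t'}-a_i\}_{i\in[m]}=D_{t'}
\]
as indexed families.

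Finally, part 2 of Lemma~\ref{lem:total-distance-CD}, applied at $t'$, gives
\[
s_{t'}(D_t)=s_{t'}(D_{t'})=\op{CD}(A+t',B).
\]

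The only possible obstacle is a subtlety: $b_{a+t}$ is defined as \emph{any} nearest neighbor, so ties could cause trouble. I would fix one nearest-neighbor selection for each translation and interpret alignment with respect to those selections. Even without that, every term $\norm{a_i+t'-b_{a_i+t}}$ equals $\min_b\norm{a_i+t'-b}$, because alignment says $b_{a_i+t}$ is a nearest neighbor of $a_i+t'$. Summing over $i$ then yields the equality directly. This makes the argument robust to tie-breaking.
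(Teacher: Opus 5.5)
Your proposal is correct and matches the paper's intended argument. The paper states the corollary without proof as an immediate consequence of part 2 of Lemma~\ref{lem:total-distance-CD}: alignment gives $D_t=D_{t'}$ termwise, hence $s_{t'}(D_t)=s_{t'}(D_{t'})=\op{CD}(A+t',B)$, and your remarks on indexed families (which the paper's own proof of that lemma uses) and on tie-breaking add care rather than change the route.
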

\begin{lemma}

Define a geometric median of a set $P$ as a point (not necessarily in the set) that minimizes the total distance $\mu_1 \in \arg\min_{p\in \mathbb{R}^d}s_p(P)$. If $t$ is aligned with some $t^*$, then every geometric median of $D_t$ has $\operatorname{CD}(A+\mu_1,B)=\mathrm{OPT}$; put differently, every $\mu_1$ is a valid $t^*$.
\begin{proof}
\[s_{t^*}(D_t)=\sum_{i\in [m]} \norm{(b_{a_i+t}-a_i)-t^*}=\sum_{i\in [m]} \norm{a_i+t^*-b_{a_i+t^*}}=\mathrm{OPT}\]
Since $\forall t', s_{t'}(D_t)\geq \operatorname{CD}(A+t',B)$, the lowest value that $s_{\mu_1}(D_t)$ can possibly achieve is $\mathrm{OPT}$, and $s_{t'}(D_t)=\mathrm{OPT} \implies t'$ is a valid $t^*$.
\end{proof}
\label{lem:median-opt}
\end{lemma}
Finally, this allows us to prove the correctness of algorithm 4.
\begin{theorem}[Correctness of algorithm 4]
Under the assumption that $\forall b_1,b_2 \in B, \norm{b_1-b_2}\geq (1+c)(1+\frac{2}{m})R$, algorithm 4 will return YES if $\mathrm{OPT}\leq R$, and NO if $\mathrm{OPT}>R(1+\eps)$
\end{theorem}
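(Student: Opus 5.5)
We prove the two directions separately. The NO direction is deterministic and uses only Lemma~\ref{lem:total-distance-CD}. The YES direction combines the sampling argument of Lemma~\ref{lem:closest-pair-upper-bound} with the alignment machinery of Lemma~\ref{lem:c-alignment-under-separation}, Corollary~\ref{cor:alignment-CD}, and Lemma~\ref{lem:median-opt}.

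\emph{NO direction.} Suppose $\mathrm{OPT}>R(1+\eps)$. Consider any sampled $\tilde t$. The algorithm builds $D_{\tilde t}$ from approximate nearest neighbors, so it is a set of differences $\{b_i-a_i\}$ induced by \emph{some} assignment from $A$ to $B$. This is exactly the setting of part 1 of Lemma~\ref{lem:total-distance-CD}, which does not require the assignment to be a nearest-neighbour one. Hence for any point $\hat\mu_1$, in particular the approximate median,
\[
s_{\hat\mu_1}(D_{\tilde t})\geq \op{CD}(A+\hat\mu_1,B)\geq \mathrm{OPT}>R(1+\eps).
\]
Thus no candidate passes the test and the algorithm returns NO.

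\emph{YES direction.} Suppose $\mathrm{OPT}\le R$.
\begin{enumerate}
\item By part 2 of Lemma~\ref{lem:closest-pair-upper-bound} with $\eps=1$, at least $\lfloor m/2\rfloor$ points $a$ satisfy $\|(b_{a+t^*}-a)-t^*\|\le 2\mathrm{OPT}/m\le 2R/m$. So each uniformly sampled $\tilde a_i$ is such a ``good'' point with probability about $1/2$, and with $O(1)$ samples (repeated for high probability) some sampled $\tilde a$ is good. For that $\tilde a$, the pair $\tilde b=b_{\tilde a+t^*}$ gives $\tilde t=\tilde b-\tilde a$ with $\|\tilde t-t^*\|\le 2R/m$.
\item By Lemma~\ref{lem:c-alignment-under-separation}, this $\tilde t$ is aligned with $t^*$ and $c$-unique.
\item Since $\tilde t$ is $c$-unique, any $c$-approximate nearest neighbour returned for $a_i+\tilde t$ must be the true nearest neighbour $b_{a_i+\tilde t}$: every other point of $B$ is at distance at least $c$ times the optimum, so it is not a valid $c$-approximate answer. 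This needs care with the boundary case of equality and with the LSH success probability. Hence the computed set equals $D_{\tilde t}$, the set induced by the nearest-neighbour assignment under $\tilde t$, and by alignment these are also the nearest neighbours under $t^*$.
\item By Lemma~\ref{lem:median-opt}, every exact geometric median $\mu_1$ of $D_{\tilde t}$ satisfies $s_{\mu_1}(D_{\tilde t})=\mathrm{OPT}\le R$.
\item The Cohen et al.\ routine returns $\hat\mu_1$ with $\|\hat\mu_1-\mu_1\|\le R\eps/m$. Since $s_p(P)$ is $m$-Lipschitz in $p$ (triangle inequality on each of the $m$ terms),
\[
s_{\hat\mu_1}(D_{\tilde t})\le s_{\mu_1}(D_{\tilde t})+m\cdot\frac{R\eps}{m}\le R+R\eps=R(1+\eps).
\]
So the test passes and the algorithm returns YES.
\end{enumerate}

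\emph{Main obstacle.} The delicate step is step 3: arguing that approximate nearest-neighbour queries reproduce the exact assignment $D_{\tilde t}$. This needs strict $c$-uniqueness and the high-probability guarantee of the multi-scale LSH from Appendix~\ref{appx:nn-data-structure}. If the separation constant only gives non-strict $c$-uniqueness, I would first try absorbing the gap into the slack factor $(1+\tfrac{2}{m})$ in the separation assumption. Alternatively, one can note that any returned neighbour other than the true one would still yield a valid assignment, and then argue directly that the chosen $b$ lies within $R$-scale of $a_i+t^*$, which by separation forces it to equal $b_{a_i+t^*}$.

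A union bound over the $O(mn)$ queries and the $O(1)$ samples then gives correctness with high probability.
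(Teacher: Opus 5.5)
Your proposal is correct and follows the paper's proof essentially step for step. The NO direction goes through part~1 of Lemma~\ref{lem:total-distance-CD}, which, as you rightly point out, also covers an assignment produced by inexact nearest-neighbour queries; the YES direction uses a good sampled $\tilde a$, Lemma~\ref{lem:c-alignment-under-separation}, exact recovery of $D_{\tilde t}$ from $c$-uniqueness, Lemma~\ref{lem:median-opt}, and the $m$-Lipschitz bound on $s_p$.

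The strictness worry in your step~3, which the paper passes over, is indeed absorbed by the slack. Since $\|\tilde t-t^*\|=\|\tilde a+t^*-b_{\tilde a+t^*}\|$, every $a_i\neq\tilde a$ satisfies $\|a_i+\tilde t-b_{a_i+t^*}\|\le\mathrm{OPT}\le R<(1+\frac{2}{m})R$ for $R>0$, and this distance is $0$ for $a_i=\tilde a$. The chain in Lemma~\ref{lem:c-alignment-under-separation} therefore gives $\|a_i+\tilde t-b'\|>c\,\|a_i+\tilde t-b_{a_i+t^*}\|$ strictly for all $b'\neq b_{a_i+t^*}$.
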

\begin{proof}
If $\mathrm{OPT}\leq R$, then with high probability some $\tilde{t}$ will be $\frac{2R}{m}$-close to some $t^*$ and thus aligned with it and $c$-unique. For this $\tilde{t}$, by Lemma \ref{lem:median-opt} every $\mu_1$ achieves value $s_{\mu_1}(D_{\tilde{t}})=\mathrm{OPT}$, and since $\tilde{t}$ is $c$-unique we will find its exact nearest neighbors by approximate nearest neighbor. $\hat{\mu}_1$ will then be $\frac{\eps R}{m}$-close to some $\mu_1$, thus $s_{\hat{\mu}_1}(D_{\tilde{t}})\leq s_{\mu_1}(D_{\tilde{t}})+m\norm{\hat{\mu}_1-\mu_1}\leq \mathrm{OPT}+ \eps R\leq (1+\eps)R$. The algorithm will therefore return YES.

If $\mathrm{OPT}> (1+\eps)R$, we have $\forall \hat{\mu}_1,s_{\hat{\mu}_1}(D_t)\geq \mathrm{CD}(A+\hat{\mu}_1,B)> (1+\eps)R$ and so the algorithm will return NO.
\end{proof}
Given a further separation assumption on $A$, $\forall a_1,a_2\in A$, $\norm{a_1-a_2}>(1+\eps)R$, one can show that the nearest neighbors $b_{a_i+t^*}$ to all $a_i$ under $t^*$ are all unique. This means that Algorithm 4 will be valid for Earth Mover's Distance under translation as well. Given the separation assumption on $B$, this is a natural assumption to make on $A$ as well. See Appendix \ref{appx:further-assumption-emd} for formal proof.

\section{Fine-Grained Complexity}
\label{section:fine-grained-complexity}
The Orthogonal Vectors problem is: given vector sets $A, B \subseteq \{0, 1\}^d$ and $|A| = |B| = n$, return whether there exists $x \in A, y \in B$ such that $x \cdot y = 0$. Naively, this can be solved in $\mathcal{O}(n^2 \cdot d)$ time by iterating through all possible pairs.

\begin{hypothesis}[Orthogonal Vectors Hypothesis]
    No algorithm solves the Orthogonal Vectors problem in $\mathcal{O}(n^{2 - \delta})$ for $\delta > 0$.
\end{hypothesis}

Strong Exponential Time Hypothesis (SETH) implies the Orthogonal Vector Hypothesis, and furthermore, reductions to the Orthogonal Vectors problem is a commonly used tool in proving theoretical lower bounds in Fine-Grained Complexity: including proving that assuming OVH is true, that Hausdorff Distance under Translation and Earth Mover's Distance under Translation cannot be computed in strongly subquadratic time (\cite{fine-grained-hausdorff-under-translation, fine-grained-complexity-emdut}). A similar theoretical bound can be proved for Chamfer Distance under translation:

\begin{theorem}[Lower Bound of $\operatorname{CDuT}$ in 1D]
    Assuming OVH, there is no algorithm that given sets $A, B \subseteq \mathbb{R}^1$ and $|A| \leq n, |B| \leq n$ computes $\operatorname{CDuT}(A, B)$ in time $\mathcal{O}(n^{2 - \delta})$ for $\delta > 0$.
    \label{thm:lower-bound-1d}
\end{theorem}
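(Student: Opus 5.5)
The plan is a fine-grained reduction from Orthogonal Vectors, adapting the gadget constructions used for Hausdorff distance under translation and for EMD under translation in one dimension (\cite{fine-grained-hausdorff-under-translation, fine-grained-complexity-emdut}). Given an OV instance $X=\{x_1,\dots,x_n\}$, $Y=\{y_1,\dots,y_n\}\subseteq\{0,1\}^d$ with $d=\mathcal{O}(\log n)$, I would build $A,B\subset\mathbb{R}$ with $|A|,|B|=\tilde{\mathcal{O}}(n)$ and a threshold $\tau$ such that $\op{CDuT}(A,B)\le\tau$ iff some pair is orthogonal. An $\mathcal{O}(n^{2-\delta})$ algorithm for $\op{CDuT}$ would then solve OV in $\tilde{\mathcal{O}}(n^{2-\delta})$ time, contradicting OVH.

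\textbf{Step 1: coordinate gadgets.} Each coordinate $k\in[d]$ gets a slot of width $W$. In the slot for $x_i$, put a point of $A$ at offset $kW+\eta\,x_i[k]$ for a small $\eta$. In the slot for $y_j$, always put a point of $B$ at offset $kW$, and put one at $kW+\eta$ only if $y_j[k]=0$. When the two slots are aligned, the $A$-point for coordinate $k$ costs $\eta$ exactly when $x_i[k]=y_j[k]=1$ and $0$ otherwise. So the gadget's cost is $\eta\,\langle x_i,y_j\rangle$, which is zero iff $x_i\perp y_j$.

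\textbf{Step 2: selecting a pair by the translation.} Place the $x$-gadgets at positions $iS$ and the $y$-gadgets at positions $jS'$, with scales chosen so that a translation $t=jS'-iS$ aligns block $x_i$ with $y_j$. The spacings $S,S'$ would be chosen, for instance incommensurately or with $S'\gg nS$, so that no other $x$-block is aligned with a $y$-block at that translation. Every other $x$-block must then pay a cost that does not depend on the vectors. I would try to enforce this with "universal" landing gadgets in $B$, which contain both offsets $kW$ and $kW+\eta$ for every $k$ and therefore absorb any $x$-block at cost $0$. These would be placed so that every unselected block lands on one of them.

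\textbf{Step 3: structure of optimal translations.} By Lemma~\ref{lem:opt-1d}, it suffices to consider translations in $T=\{b-a\}$. I would argue that any such translation not of the form "aligned block pair" makes some $A$-point pay at least a large penalty $P$. This can be arranged with heavy anchor points: many copies of $A$-points at block starts, which match cheaply only at intended offsets. Then set $\tau<P$ and conclude that $\op{CDuT}(A,B)\le\tau$ iff $\min_{i,j}\eta\langle x_i,y_j\rangle=0$.

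\textbf{Main obstacle.} The main obstacle is Step 2 under the size constraint. Because the Chamfer sum runs over all of $A$, every unselected $x_{i'}$ block contributes. Naively placing universal gadgets at every landing position $jS'+(i'-i)S$ uses $\Theta(n^2)$ points, which destroys the reduction. What I would try first is the two-scale trick from the EMDuT reduction. $B$ would contain the $y$-gadgets at a coarse scale, together with a single long run of universal gadgets at fine spacing $S$ next to each $y$-gadget. The run would be shared via a periodic structure, so that its total size is $\mathcal{O}(nd)$, for example by encoding universal blocks once per $y$ and exploiting that the unselected $x$-blocks land in a contiguous window.

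If that fails, the fallback is to allow the unselected blocks a uniform nonzero cost. The idea would be to make every $x$-block's cost against a universal or mismatched region equal a fixed constant $C$, and then take $\tau=(n-1)C$ plus the gadget term. Keeping that constant exactly uniform across all translations in $T$ is the delicate calculation.
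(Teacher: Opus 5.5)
\textbf{Step 2 is left open.} Your proposal stops at the step that carries the whole reduction: what the $n-1$ unselected $x$-blocks pay. Your primary route for this cannot be completed. Write $\alpha_i,\beta_j$ for the block offsets, so the selecting translation is $t_{ij}=\beta_j-\alpha_i$.

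\begin{itemize}
\item Your requirement that no second $x$-block is aligned with a $y$-block at $t_{ij}$ says $\alpha_{i'}-\alpha_i\neq\beta_{j'}-\beta_j$ whenever $i'\neq i$. In other words, the nonzero differences of the $\alpha$'s avoid all differences of the $\beta$'s.
\item Then, for one fixed $i$, the $n(n-1)$ landing offsets $\beta_j+\alpha_{i'}-\alpha_i$ ($j\in[n]$, $i'\neq i$) are pairwise distinct. Indeed, $\beta_j+\alpha_{i'}=\beta_{j'}+\alpha_{i''}$ forces $i'=i''$ and hence $j=j'$.
\item A zero-cost landing needs a $B$-point exactly under the anchor of every landed block.
\end{itemize}
So $|B|\ge n(n-1)$ however the universal gadgets are shared or arranged periodically.

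\textbf{The missing gadget.} Your fallback (a uniform nonzero cost for unselected blocks) is the right direction and is what the paper does. It needs an ingredient you do not have: a vector gadget whose cost far from $B(y)$ is independent of $x$ and $y$.

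\begin{itemize}
\item The paper (Lemma~\ref{lem:convert-1d-reduction}) uses two points per coordinate: $\{4i-2,4i-1\}$ if $x_i=0$ and $\{4i-3,4i\}$ if $x_i=1$, plus endpoints $0$ and $4d+1$. So every $A(x)$ has $2d+2$ points with the same coordinate sum.
\item For $|t|\ge w=4d+1$, all of $A(x)+t$ goes to an extreme point of $B(y)$. This gives $\operatorname{CD}(A(x)+t,B(y))=2(d+1)|t|-4d^2-5d-1$, independent of $x$ and $y$.
\item A non-orthogonal pair costs at least $\max\{1,|t|\}$ for \emph{every} $t$, not only at exact alignment.
\item It then reuses the combination step of the EMDuT OV reduction unchanged.
\end{itemize}

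Your Step~1 gadget has neither property:
\begin{itemize}
\item With a single point at $kW+\eta x[k]$, a block's far-field cost shifts with the Hamming weight of $x_{i'}$. So no vector-independent background exists.
\item Without the Step~3 anchors, a non-orthogonal pair can cost $0$ at a nearby translation. For $x=1^d$ and any $y\neq 0$, shifting by $-\eta$ puts every $A$-point on some $kW\in B(y)$.
\end{itemize}

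\textbf{The threshold.} Even with a balanced gadget, the far-field cost grows linearly with the offset; it is not a constant $C$ per block. The background is therefore a vector-independent function of the selected translation, not $(n-1)C$. Handling that is exactly the part the paper imports from the EMDuT reduction.
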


The proof of this theorem follows almost without modification from the proof of conditional lower bound of Earth Mover's Distance under Translation, found in \citet{fine-grained-complexity-emdut}. We define vector gadget construction for a reduction from OV to $\operatorname{CDuT}$:

For a vector $x \in A \subseteq \{0, 1\}^d$, define the vector gadget mapping $A(x)$ into a set of points as:
\begin{itemize}
    \item add one point at 0 and one point at $4 d + 1$,
    \item for every $i \in \{1, ..., d\}$:
    \begin{itemize}
        \item if $x_i = 0$, add points at $\{4 i - 2, 4 i - 1\}$
        \item if $x_i = 1$, add points at $\{4 i - 3, 4 i\}$
    \end{itemize}
\end{itemize}

For a vector $y \in B \subseteq \{0, 1\}^d$, define the vector gadget mapping $B(y)$ into a set of points as:
\begin{itemize}
    \item add one point at 0 and one point at $4 d + 1$,
    \item for every $i \in \{1, ..., d\}$:
    \begin{itemize}
        \item if $y_i = 0$, add points at $\{4 i - 3, 4 i - 2, 4 i - 1, 4 i\}$
        \item if $y_i = 1$, add points at $\{4 i - 2, 4 i - 1\}$
    \end{itemize}
\end{itemize}

With these definitions, we now have this lemma:
\begin{lemma}
    \label{lem:convert-1d-reduction}
    Let $w = 4 d + 1$ be the width of the gadget. Let $x, y \in \{0, 1\}^d$:
    \begin{enumerate}
        \item if $x$ and $y$ are not orthogonal, then $\operatorname{CDuT}(A(x), B(y)) = 0$.
        \item if $x$ and $y$ are not orthogonal, then $\operatorname{CD}(A(x) + t, B(y)) \geq \max\{1, t\}$.
        \item if $|t| \geq w$, then $\operatorname{CD}(A(x) + t, B(y)) = |t| \cdot 2 (d + 1) - 4 d^2 - 5 d - 1$.
    \end{enumerate}
\end{lemma}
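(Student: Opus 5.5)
The plan is to analyze $f(t) := \operatorname{CD}(A(x)+t, B(y))$ directly from the gadget coordinates. All points of both gadgets are integers in $[0,w]$, and both gadgets contain $0$ and $w$. I would do the three items in the order 3, 2, 1, because item 3 is a pure computation, item 2 reuses that geometry, and item 1 is where the wording of the hypothesis matters.

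\textbf{Item 3.} For $|t| \ge w$, every point of $A(x)+t$ lies outside $[0,w]$ on one side.
\begin{itemize}
\item If $t \ge w$, each translated point has nearest neighbour $w \in B(y)$, so $f(t) = \sum_{a \in A(x)} (a+t-w)$.
\item If $t \le -w$, the nearest neighbour is $0$, so $f(t) = \sum_{a \in A(x)} -(a+t)$.
\end{itemize}
The key observation is that each coordinate block contributes $8i-3$ to $\sum_a a$ whether $x_i = 0$ (points $4i-2, 4i-1$) or $x_i = 1$ (points $4i-3, 4i$). Hence
\[
\sum_a a = w + \sum_{i=1}^d (8i-3) = 4d^2+5d+1,
\]
and $|A(x)| = 2d+2$. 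Substituting gives $f(t) = 2(d+1)|t| - (2d+2)w + 4d^2+5d+1$ when $t \ge w$. This equals the claimed $2(d+1)|t| - 4d^2-5d-1$, and the case $t \le -w$ gives the same value by symmetry.

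\textbf{Item 2.} Here $x$ and $y$ are not orthogonal, so fix a coordinate $i$ with $x_i = y_i = 1$. I will show $f(t) \ge \max\{1,|t|\}$.
\begin{itemize}
\item \emph{The $|t|$ part.} For $t > 0$ the point $w+t$ lies at distance exactly $t$ from $B(y) \subset [0,w]$. For $t<0$ the point $0+t$ does the same. So $f(t) \ge |t|$ for every $t$, which already handles $|t| \ge 1$.
\item \emph{Small shifts, $t \in [0,1)$.} Block $i$ of $B(y)$ is $\{4i-2, 4i-1\}$. The nearest other $B$-points to its surroundings are at $4i-4$ or below, and at $4i+1$ or above. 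So $4i-3+t$ is at distance at least $1-t$ from $B(y)$, and $4i+t$ is at distance at least $1-t$ as well. The endpoints $0+t$ and $w+t$ each cost $t$.
\item \emph{Adding up.} The total is $f(t) \ge 2(1-t) + 2t = 2 \ge 1$. The case $t \in (-1,0]$ is the mirror image.
\end{itemize}
I would double-check the neighbour bookkeeping at $i=1$ and $i=d$, where the neighbours are the endpoints $0$ and $w$.

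\textbf{Item 1.} This is the main obstacle. I would argue through $t=0$: $\operatorname{CDuT} = 0$ iff $A(x)+t \subseteq B(y)$ for some $t$. Since both $0$ and $w$ lie in $A(x)$ and $B(y) \subset [0,w]$, this forces $t = 0$. So $\operatorname{CDuT}(A(x),B(y)) = 0$ iff $A(x) \subseteq B(y)$, i.e.\ iff every coordinate with $x_i = 1$ has $y_i = 0$, i.e.\ iff $x \cdot y = 0$.

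This calculation shows that the item as worded, with hypothesis ``not orthogonal'', cannot be proved. Item 2, proved above, gives $\operatorname{CDuT} \ge 1$ under exactly that hypothesis. A concrete counterexample is $d=1$ and $x=y=(1)$: then $A(x) = \{0,1,4,5\}$ and $B(y) = \{0,2,3,5\}$, so $\operatorname{CDuT} = 2 \neq 0$.

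So for item 1 my plan establishes the containment argument, which yields $\operatorname{CDuT}=0$ exactly in the orthogonal case. Under the stated ``not orthogonal'' hypothesis it yields the opposite conclusion, $\operatorname{CDuT} \ge 1$. Item 1 therefore appears to contain a typo (``not orthogonal'' where ``orthogonal'' is meant), and the reduction only needs the intended dichotomy: value $0$ in the orthogonal case versus at least $1$ otherwise.
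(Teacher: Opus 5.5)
Your proposal takes the same route as the paper on all three items:
\begin{itemize}
\item item 1 uses the containment $A(x)\subseteq B(y)$;
\item item 2 uses the extreme point $w+t$ (or $0+t$) together with the conflicting block where $x_i=y_i=1$;
\item item 3 uses the fact that every coordinate pair of $A(x)$ has sum $8i-3$ (the paper phrases this as ``center of mass $4i-1.5$'').
\end{itemize}
Your diagnosis of item 1 is right. The paper's own proof begins ``if $x$ and $y$ are orthogonal, $A(x)\subseteq B(y)$'', so it proves the orthogonal version, and your $d=1$ example shows that the statement as printed is false.

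There is one slip in the small-shift case of item 2: the left endpoint $0+t$ does not always cost $t$. If $y_1=0$ then $1\in B(y)$, and for $t>1/2$ the point $0+t$ lies within $1-t<t$ of it. So your claimed bound $f(t)\ge 2$ on $[0,1)$ is false. For example, take $d=2$, $x=y=(0,1)$ and $t=0.9$. Then $A(x)+t=\{0.9,2.9,3.9,5.9,8.9,9.9\}$ and $B(y)=\{0,1,2,3,4,6,7,9\}$, which gives $f(0.9)=1.4$.

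The fix is to discard that endpoint. The point $w+t$ costs exactly $t$, and the two block points cost at least $1-t$ each, so $f(t)\ge t+2(1-t)=2-t>1$. In the mirror case $t\in(-1,0]$, use $0+t$ instead of $w+t$, since $w+t$ can be close to $4d\in B(y)$ when $y_d=0$.

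Your per-point accounting for the conflicting block (at least $1-t$ each) is actually more careful than the paper's, which asserts that the block alone contributes $2$. That is also false in the example above, where the block contributes $0.2$. Only the bound $1$ is needed for the reduction.
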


If $x$ and $y$ are orthogonal, it is clear that $A(x) \subseteq B(y)$. If they are not orthogonal, the index $i$ in which $x_i = y_i = 1$ creates a chamfer distance of two under no translation, and under translation the most extreme points of $A$ incur distance at least $|t|$. Gadget $A(x)$ also has the same number of points and center mass irrespective of $x$, and therefore with $|t| > w$, the distance does not depend on $x$ or $y$, as the nearest neighbor for all of $A(x)$ will map to one of the extreme points in $B(y)$. With this lemma, the rest of the argument of Theorem 4.2 and OV reduction in \citet{fine-grained-complexity-emdut} holds without modification with $\operatorname{CDuT}$, proving Theorem \ref{thm:lower-bound-1d}.

The $k$-clique problem is given a graph $G = (V, E)$ return whether there is a subgraph of size $k$ that is completely connected. It is known that the Exponential Time Hypothesis implies that $k$-clique cannot be solved in $f(k) \cdot N^{o(k)}$ for any computable function $f$.

\begin{theorem}
    Assuming ETH, there is no algorithm that given sets $A, B \subseteq \mathbb{R}^d$ and $|A| \leq n, |B| \leq n$ computes a $(1 + \eps)$-approximation of $\operatorname{CDuT}(A, B)$ in time $(\frac{n}{\eps})^{o(d)}$.
    \label{thm:lower-bound-high-d}
\end{theorem}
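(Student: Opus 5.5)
The plan is to adapt the ETH-based lower bound for Earth Mover's Distance under Translation in high dimensions from \citet{fine-grained-complexity-emdut}, in the same way Theorem \ref{thm:lower-bound-1d} adapted their OV reduction. The reduction is from $k$-clique. Given a graph $G=(V,E)$ with $N$ vertices, we construct point sets $A,B\subseteq\mathbb{R}^d$ with $d=\Theta(k)$ and $|A|,|B|\le \mathrm{poly}(k)\cdot N^{\mathcal{O}(1)}$. We also fix a gap parameter $\eps=1/\mathrm{poly}(k)$ (or a constant) such that:
\begin{itemize}
\item if $G$ has a $k$-clique, then $\op{CDuT}(A,B)\le L$;
\item if $G$ has no $k$-clique, then $\op{CDuT}(A,B)>(1+\eps)L$.
\end{itemize}
A $(n/\eps)^{o(d)}$-time $(1+\eps)$-approximation would then decide $k$-clique in time $(\mathrm{poly}(k)N^{\mathcal{O}(1)})^{o(k)}=f(k)N^{o(k)}$, contradicting ETH.

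\textbf{Encoding.} A translation $t\in\mathbb{R}^d$ encodes a choice of $k$ vertices: one coordinate block per clique slot $j\in[k]$, with $t_j$ ranging over $N$ discrete positions. This uses vertex labels embedded at spacing large compared to the gadget widths.

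\textbf{Gadget sets.} The set $A$ contains, for each pair of slots $\{j,j'\}$, a small group of probe points. The set $B$ contains, for each edge $(u,v)\in E$ and each pair $\{j,j'\}$, a point placed so that the probe for $\{j,j'\}$ lands exactly on it (distance $0$) iff $t_j$ encodes $u$ and $t_{j'}$ encodes $v$. Otherwise the probe's nearest point in $B$ is at distance at least $1$. To enforce that the optimal $t$ is essentially integral (snaps to vertex encodings), I will add "grid" points to $B$ along each coordinate and matching anchor points to $A$. These make any non-lattice $t$ pay a cost proportional to its distance from the lattice, much like item 3 of Lemma \ref{lem:convert-1d-reduction} controls large shifts in 1D. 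Since Chamfer only charges points of $A$, every constraint must be phrased as "this point of $A$ must find a partner in $B$". That is convenient here: an edge check becomes the existence of a nearby $B$-point.

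\textbf{Analysis.} In the yes case, a clique gives $t$ with all anchors and probes matched at baseline cost $L$, the unavoidable anchor cost if we do not make it zero. In the no case there are two sub-cases:
\begin{itemize}
\item Near-lattice $t$ decodes to a $k$-tuple with some non-edge pair, so at least one probe pays at least a constant, weighted by replicating each probe group so the gap exceeds $\eps L$.
\item Far-from-lattice $t$ pays through the anchors.
\end{itemize}
Since Chamfer under translation is a sum of $1$-Lipschitz terms in $t$, costs interpolate continuously, and I will handle the intermediate region by a triangle-inequality argument as in Lemma \ref{lem:appx-chamfer-translations}.

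\textbf{Main obstacle.} The hardest step is the robustness of the no case. Chamfer lets many $A$-points share one $B$-point and charges nothing for unused $B$-points. A fractional or off-lattice translation might therefore sit "between" encodings and let several probes each cheaply reach different edge points. The first remedy I would try is to separate coordinates for different slot-pairs. Each probe for pair $\{j,j'\}$ then moves only in the 2D subspace $(t_j,t_{j'})$, and its nearest $B$-points form a well-separated 2D point set with spacing $\gg 1$. Given this, the local cost is at least the distance from $(t_j,t_{j'})$ to the edge set. Summing over pairs and using that the anchor terms pin each $t_j$ within $\mathcal{O}(1)$ of a vertex code, any low-cost $t$ rounds to a genuine clique.

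If $\eps$ must be polynomially small in $k$ for the gap, that is harmless, since $(n/\eps)^{o(d)}$ remains $N^{o(k)}\cdot f(k)$.
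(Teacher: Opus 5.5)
Your outer frame (a $k$-clique reduction with $d=\Theta(k)$ and a multiplicative gap) is right, but the construction has a gap. The step that fails is the claim that a probe for the slot pair $\{j,j'\}$ ``moves only in the 2D subspace $(t_j,t_{j'})$'' and lands at distance $0$ on the edge point for $(u,v)$ whenever $t_j,t_{j'}$ encode $u,v$. A translation moves every point of $A$ by the full vector $t\in\mathbb{R}^d$. So $a+t=b$ forces $b$ to agree with $a+t$ in the other $k-2$ slot coordinates as well. One $B$-point per (edge, slot pair) cannot do this for every clique, and separating coordinates for different slot pairs does not change that.

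Your soundness direction is fine, since projecting onto coordinates $j,j'$ is $1$-Lipschitz in any $\ell_p$. Completeness breaks, though. In the yes case each probe still pays an off-block term $\norm{(a_\ell+t_\ell-b_\ell)_{\ell\notin\{j,j'\}}}$. This term depends on which vertices the \emph{other} slots encode and is of the order of the whole code range, swamping the $\Theta(1)$ non-edge penalty. Hence there is no fixed baseline $L$ separated from the no case by a factor $1+\eps$. Your anchors ``along each coordinate'' have the same defect: a grid on one axis of $\mathbb{R}^d$ does not charge only the distance of $t_j$ to the lattice, and a product grid has $N^k$ points.

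The missing idea is a mechanism that makes the off-block contribution \emph{constant} over all vertex-encoding translations. This is metric-specific, and your proposal never fixes the metric. In $\ell_1$ it comes from separability. Use vertex codes in $\{1,\dots,N\}$, both orientations of each edge, and a single probe point at the origin. Give every $B$-point of the pair gadget for $\{j,j'\}$ off-block coordinates $0$; then on the box $[0,N+1]^k$ the off-block cost is the linear form $\sum_{\ell\notin\{j,j'\}}t_\ell$. Add $\binom{k-1}{2}$ singleton gadgets whose $B$-point is $(N+1,\dots,N+1)$. These cancel the linear terms exactly and only add cost outside the box. Glue everything with Lemma~\ref{lem:gadget-combination}.

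The total is then $\binom{k-1}{2}k(N+1)$ plus $\sum_{j<j'}\min_{(u,v)\in E}(|t_j-u|+|t_{j'}-v|)$. This is $0$ at a clique and at least $1/2$ otherwise: round each $t_j$ to the nearest integer, and some rounded pair is a non-edge. So $\eps=\Theta(1/(k^3N))$ suffices, and the time bound still yields $f(k)N^{o(k)}$. In $\ell_2$ the off-block term does not separate, so this trick is unavailable, and $\ell_\infty$ needs its own construction.

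The paper avoids building any of this. It takes the EMDuT $k$-clique reduction for $\ell_1$ and $\ell_\infty$ as a black box and notes that every $A$-gadget there is a single point. Then $\operatorname{EMD}(A_i+t,B_i)=\min_{b\in B_i}\norm{a_i+t-b}=\operatorname{CD}(A_i+t,B_i)$, so injectivity is never used. It then reassembles the gadgets with Lemma~\ref{lem:gadget-combination}.
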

This follows from the $k$-Clique reduction used for EMDuT, in which the constructed instance has ambient dimension \(d=\Theta(k)\).
\begin{lemma}[Gadget Combination Lemma]
    \label{lem:gadget-combination}
    Let $1 \leq p \leq \infty$. Given sets $A_1, B_1, ..., A_k, B_k \subset \mathbb{R}^d$ of total size $n$, in time $\mathcal{O}(n d)$ we can compute $A, B \subset \mathbb{R}^d$ of total size $n$ such that \[\operatorname{CDuT}(A, B) = \min_{t \in \mathbb{R}^d} \sum_{i = 1}^k \operatorname{CD}(A_i + t, B).\]
\end{lemma}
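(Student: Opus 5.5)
The plan is to place each pair $(A_i,B_i)$ in its own ``block'' of space, with the blocks very far apart along the first coordinate axis. I read the right-hand side as $\sum_i \operatorname{CD}(A_i+t,B_i)$, which is what the EMDuT gadget-combination lemma uses, and I assume every $A_i$ and $B_i$ is nonempty. Let
\[
F(t):=\sum_{i=1}^k \operatorname{CD}(A_i+t,B_i),\qquad M:=F(0),
\]
and let $D$ be an upper bound on $\|x-y\|$ for all $x,y\in\bigcup_i (A_i\cup B_i)$. Choose $L>2(M+D)+2D$, put $v_i:=iL\,e_1$, and define
\[
A:=\bigcup_i (A_i+v_i),\qquad B:=\bigcup_i (B_i+v_i).
\]
Computing $D$, $M$ (or any crude upper bound on it; a bounding box suffices for $D$) and writing out the shifted points takes $\mathcal{O}(nd)$ time up to the cost of evaluating $M$. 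If $M$ is too expensive, I would substitute for it the trivial bound $nD$, which holds since $F(0)\le \sum_i |A_i|\,D$. The total size is unchanged, and nothing below uses anything about the norm beyond the triangle inequality, so every $\ell_p$ is covered.

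\textbf{Upper bound.} For every $t$ we have $\operatorname{CD}(A+t,B)\le F(t)$, because each $a+v_i+t$ can be matched within $B_i+v_i\subseteq B$ and the common shift $v_i$ cancels. Hence $\operatorname{CDuT}(A,B)\le \min_t F(t)\le M$.

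\textbf{Lower bound: localising $t$.} Fix any $t$ with $\operatorname{CD}(A+t,B)\le M$; it suffices to show $\operatorname{CD}(A+t,B)=F(t)$ for such $t$. Each point of $A+t$ then lies within distance $M$ of some point of $B$.
\begin{itemize}
\item Take $a\in A_k$, and suppose its nearest neighbour lies in block $j$. Then $\|t-(j-k)L e_1\|\le M+D$.
\item Take $a'\in A_1$, and suppose its nearest neighbour lies in block $j'$. Then $\|t-(j'-1)Le_1\|\le M+D$.
\item Since $j\le k$ and $j'\ge 1$, the first condition puts $t$ near a non-positive multiple of $L e_1$ and the second near a non-negative one. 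Because $L>2(M+D)$, these two lattice points must coincide, so both are $0$ and $\|t\|\le M+D$.
\end{itemize}
This is the one step where the one-sidedness of Chamfer distance matters. A translation by roughly $\pm L e_1$ could align most blocks with their neighbours, but then the extreme block $A_k$ (respectively $A_1$) is left far from every point of $B$. I expect this to be the main obstacle, and it is exactly why $M$ must be controlled and $L$ chosen relative to it.

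\textbf{Lower bound: nearest neighbours stay in their own block.} With $\|t\|\le M+D$, any point $a+v_i+t$ with $a\in A_i$ is at distance at least
\[
L-2D-\|t\|>M
\]
from every point of $B_j+v_j$ with $j\ne i$. Its distance to its nearest neighbour is at most $M$, so that nearest neighbour lies in $B_i+v_i$. Summing over all points gives $\operatorname{CD}(A+t,B)=F(t)$ for such $t$.

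\textbf{Conclusion.} For any $t$, either $\operatorname{CD}(A+t,B)>M\ge \min F$, or $\operatorname{CD}(A+t,B)=F(t)\ge\min F$. Together with the upper bound, this gives $\operatorname{CDuT}(A,B)=\min_t F(t)$.
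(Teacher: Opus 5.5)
Your proof is correct and follows the paper's argument, including your reading of the right-hand side as $\sum_i \operatorname{CD}(A_i+t,B_i)$, which is also what the paper's proof intends: space the blocks far apart along $e_1$, rule out large translations because they leave an extreme block $A_1$ or $A_k$ far from all of $B$ at a cost above the $t=0$ value, and note that small translations keep every nearest neighbour inside its own block. Your version is somewhat more careful than the paper's sketch in two ways: your two-sided localisation of $t$ handles translations in every direction of $\mathbb{R}^d$, whereas the paper treats $t$ as a scalar (``without loss of generality $t>0$''), and your bounding box and the bound $nD$ actually achieve the $\mathcal{O}(nd)$ construction time, which the paper leaves implicit when it uses the exact diameter $\Delta$.
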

This lemma is true because $A_1, .., A_k$ gadget can be placed in order on a line in $\mathbb{R}^d$ with space $10 n \Delta$ between each adjacent pair in $A$, and $B_1, ..., B_k$ similarly placed in $B$, where $\Delta$ is the diameter of $\bigcup_{i = 1}^k A_i \cup B_i$; with such a large gaps between gadgets, the optimal translation is guaranteed to map all the points in $A_i$ to $B_i$ or suffer a penalty of $4 n \Delta$ which is guaranteed to be more than $\operatorname{CDuT}(A, B)$. Furthermore, the asymmetric reductions from $k$-clique to Earth Mover's Distance under Translation in \cite{fine-grained-complexity-emdut} uses $A$ gadgets of size one, for both $\ell_1$ and $\ell_\infty$. As a result, the necessity of injectivity in mappings between points in the $A$ gadget and points in the $B$ gadget is unused, and the reductions to Earth Mover's Distance under Translation work as reductions to $\operatorname{CDuT}$.

The asymmetric reductions for $\operatorname{CDuT}$ can be modified to a symmetric reduction by expanding each $A$ and $A'$ gadget into $|E|$ points tightly centered around the original point. The lack of the injectivity requirements will result in all the points mapping to the same point, with the effect of this modification resulting in scaling the $\operatorname{CDuT}$ by $|E|$.

\section{Discussion}
Overall, our results indicate that computing Chamfer distance under translation in high dimensions is difficult. Across our algorithms, candidate translations of the form \(b-a\) play a central role: in one dimension, they contain an exact optimum and yield a near-quadratic time algorithm; in higher dimensions, they provide efficient coarse approximations and serve as anchors for stronger guarantees. However, near-optimal approximation in the worst case appears to require additional search around these candidates, leading to exponential dependence on the dimension. Our decision algorithm shows that stronger guarantees are possible under additional geometric structure, but this comes at the cost of a separation assumption. This apparent barrier to dimension-independent near-optimal approximation is not merely an artifact of our algorithm: our fine-grained lower bounds suggest that substantially improving the worst-case dimension dependence would require violating standard complexity assumptions.

Notably, exact computation in higher dimensions remains open for \(\ell_p\) norms with \(1<p<\infty\), while our known exact higher-dimensional extension applies only to the locally linear cases \(\ell_1\) and \(\ell_\infty\).

More broadly, since Chamfer distance is already a relaxation of Earth Mover's Distance, our results suggest that further relaxations may be necessary for scalable translation-invariant comparison in high-dimensional settings.

\newpage

\bibliographystyle{plainnat}
\bibliography{refs}

\newpage

\appendix

\section{Proof of Lemma \ref{lem:opt-1d}}
\label{appx:opt-1d}
\begin{proof}[Proof of Lemma \ref{lem:opt-1d}]
 First, for a given set $B \subset \mathbb{R}$, let function $f_a (t) = \min_{b \in B} \|a - b\|$: this function calculates the minimum distance of $a$ to any point in $B$. Then let $b_0, b_1, ..., b_n \in \mathbb{R}^1$, such that $b_0 < b_1 < ... <b_n$ and $B = \{b_0, b_1, ..., b_n\}$.

 \begin{remark}
     Let \[T_a := \{b_i - a : i \in [n]\} \cup \left\{ \dfrac{b_{i + 1} + b_i}{2} - a : i \in [n - 1] \right\}.\]
     $f_a (t)$ is continuous over $\mathbb{R}^1$, and differentiable everywhere except $T_a$. Furthermore, for some $k \in \mathbb{R}^1$, for all $t \notin T_a$, \[f_a'' (t) = 0.\]
 \end{remark}

 Let the Chamfer distance under translation, parametrized by translation,  be called $\operatorname{CDpT}$:
 \[\operatorname{CDpT}_{A, B}(t) = \op{CD}(A+t,B) = \sum_{a \in A} f_a(t).\]
 The minimum over all translations in $\mathbb{R}^1$ of $\operatorname{CDpT}$ is equal to $\operatorname{CDuT}$.

 \begin{remark}
     Let $T_A := \bigcup_{a \in A} T_a$. $\operatorname{CDpT}_{A, B}(t)$ is continuous over $\mathbb{R}^1$, and differentiable for all $t \notin T_a$. Furthermore, for $t \notin T_A$, \[\operatorname{CDpT}_{A, B}'' (t) = 0.\]
 \end{remark}

 These properties result from $\operatorname{CDpT}_{A, B} (t)$ being the summation of functions that are continuous, differentiable for $t \notin T_A$, and have zero second order derivatives for $t \notin T_A$.

 Additionally, for all $a \in A$, $f_a$ is non-negative, and as a result $\operatorname{CDpT}_{A, B}(t)$ is non-negative. Moreover, $\operatorname{CDpT}_{A, B}(t)\to \infty$ as $|t|\to\infty$, since every point of $A+t$ moves arbitrarily far from the finite set $B$. Therefore $\operatorname{CDpT}_{A, B}$ attains a global minimum. At any global minimizer, either $\operatorname{CDpT}_{A, B}'(t)=0$ or $\operatorname{CDpT}_{A, B}'(t)$ does not exist. These conditions give exactly the set $T_A$.

 Assume $A \neq \emptyset$, and as a result $\operatorname{CDpT}_{A, B}(t)$ is not identically zero. Thus the derivative of $\operatorname{CDpT}_{A, B}(t)$ is not zero over its domain. As a consequence, if there exists $t_i \in \mathbb{R}^1$ such that $\operatorname{CDpT}_{A, B}' (t_i) = 0$, then there must exist $t_j \in \mathbb{R}^1$ such that $\operatorname{CDpT}_{A, B}' (t_j) \neq 0$; without loss of generality, assume $t_i < t_j$. Because for all $t \in \mathbb{R}^1$ $\operatorname{CDpT}_{A, B}'' (t) = 0$ or does not exist, there must exists a $t_k, t_i < t_k < t_j$ such that $\operatorname{CDpT}_{A, B}' (t_k)$ does not exist, and every point in $t_0 \in [t_i, t_k)$ satisfies $\operatorname{CDpT}_{A, B}' (t_0) = 0$.

 Thus, $\operatorname{CDpT}_{A, B} (t_i) = \operatorname{CDpT}_{A, B} (t_k)$. As a result, if the minimum of $\operatorname{CDpT}_{A, B}$ over $\mathbb{R}^1$ is at a $t^*$ where $\operatorname{CDpT}_{A, B}' (t^*) = 0$, there exists $t_k$ such that $\operatorname{CDpT}_{A, B}' (t_k)$ does not exist and $\operatorname{CDpT}_{A, B} (t_i) = \operatorname{CDpT}_{A, B} (t_k)$. Therefore $t_k \in T_A$. Otherwise, the minimum of $\operatorname{CDpT}_{A, B}$ is at $t^*$ such that $\operatorname{CDpT}_{A, B}' (t^*)$ does not exist.

 To complete this proof, we argue that any $t \in T_A, t \notin T$ cannot be the global minimum. Suppose for contradiction this is true. For such $t$, for small enough $\eps > 0$,
 \begin{equation}
     \operatorname{CDpT}_{A, B}' (t - \eps) >  \operatorname{CDpT}_{A, B}' (t + \eps).
     \label{equ:mid-maxness}
 \end{equation}
 Setting the left side of \ref{equ:mid-maxness} to zero results in the derivative of $\operatorname{CDpT}_{A, B}$ for a neighborhood greater than $t$ to be negative, making the value of $\operatorname{CDpT}_{A, B}$ in that neighborhood less than $\operatorname{CDpT}_{A, B}(t)$, contradicting our claim. Setting the right side to zero results in the derivative of $\operatorname{CDpT}_{A, B}$ for a neighborhood less than $t$ to be positive, making the value of $\operatorname{CDpT}_{A, B}$ in that neighborhood less than $\operatorname{CDpT}_{A, B}(t)$, contradicting our claim. Furthermore, the left side of \ref{equ:mid-maxness} cannot be negative while and the right side positive. Therefore $t$ cannot be a local minimum, implying it cannot be a global minimum.
\end{proof}

\section{Proof of Lemma \ref{lem:appx-chamfer-translations}}
\begin{proof}[Proof of Lemma \ref{lem:appx-chamfer-translations}]
Let $\norm{t-t^*}\leq \frac{k}{m}\mathrm{OPT}$. Then
\begin{align*}
\operatorname{CD}(A+ t,B)&=\sum_{a\in A}\min _{b\in B}\norm{a+t-b}\\
&\leq \sum_{a\in A}\norm{a+t-b_{a+t^*}}\\
&\leq \sum_{a\in A}\norm{a+t^*-b_{a+t^*}}+\norm{t-t^*}\\
&\leq \mathrm{OPT}+m\frac{k\mathrm{OPT}}{m}\\
&=(1+k)\mathrm{OPT}.
\end{align*}
Since by Lemma~\ref{lem:closest-pair-upper-bound}, $\norm{(b^*-a^*)-t^*}\leq \frac{\mathrm{OPT}}{m}$, and since $a^*+(b^*-a^*)=b^*$ contributes $0$ to the Chamfer sum,
    \begin{align*}
        \mathrm{OPT}
        &\leq \min_{a\in A,b\in B}\operatorname{CD}(A+(b-a),B)\\
        &\leq \operatorname{CD}(A+ (b^*-a^*),B)\\
        &\leq \mathrm{OPT}-\norm{a^*+t^*-b^*}+(m-1)\norm{(b^*-a^*)-t^*}\\
        &= \mathrm{OPT}+(m-2)\norm{(b^*-a^*)-t^*}\\
        &\leq \left(2-\frac{2}{m}\right)\mathrm{OPT}
        \leq 2\mathrm{OPT}.
    \end{align*}
    Similarly, for at least $\lfloor\eps m/2\rfloor$ points $a_i$ in $A$ (Lemma~\ref{lem:closest-pair-upper-bound}) which have $\norm{(b_{a_i+t^*}-a_i)-t^*}\leq (1+\eps)\frac{\mathrm{OPT}}{m}$,
    \begin{align*}
        \mathrm{OPT}
        &\leq \min_{b\in B}\operatorname{CD}(A+(b-a_i),B)\\
        &\leq \operatorname{CD}(A+ (b_{a_i+t^*}-a_i),B)\\
        &\leq (2+\eps)\mathrm{OPT}.
    \end{align*}
\end{proof}

\section{Proof of Lemma \ref{lem:c-alignment-under-separation}}
\begin{proof}[Proof of Lemma \ref{lem:c-alignment-under-separation}]
For all $a_i \in A$ and $b' \neq b_{a_i+t^*} \in B$,
\begin{align*}
\|a_i + t - b'\| 
&\geq \|b' - b_{a_i+t^*}\| - \|a_i + t - b_{a_i+t^*}\| \\
&\geq (c+1)(1+\frac{2}{m})R - (\|a_i + t^* - b_{a_i+t^*}\| + \|t - t^*\|) \\
\intertext{Assuming $\norm{t-t^*}\leq \frac{2R}{m}$, and since $\norm{a_i + t^* - b_{a_i+t^*}} \leq R$, we have}
&\geq (c+1)(\|a_i + t^* - b_{a_i+t^*}\| + \|t - t^*\|) - (\|a_i + t^* - b_{a_i+t^*}\|+ \|t - t^*\|) \\
&= c(\|a_i + t^* - b_{a_i+t^*}\| + \|t - t^*\|) \\
&\geq c\|a_i + t - b_{a_i+t^*}\|=c\|a_i + t - b_{a_i+t}\|
\end{align*}
\end{proof}

\newpage

\section{Proof of Lemma \ref{lem:convert-1d-reduction}}
\label{appx:convert-1d-reduction}
\begin{proof}
For property 1, if $x$ and $y$ are orthogonal, $A(x) \subseteq B(y)$. Therefore $\operatorname{CD}(A(x), B(y)) = 0$. For property 2, if $x$ and $y$ are orthogonal, without loss of generality, assume $t \geq 0$. Then the right most point of $A(x)$ will be shifted $t$ to the right, and it will be $t$ away from the rightmost point in $B(y)$. If $t < 1$, then because $x$ and $y$ are orthogonal, $\exists i \in [d]$ such that $x_i = y_i = 1$. Therefore between $(4 i - 4, 4 i + 1)$ in $B(y)$ are $\{4 i - 3, 4 i\}$ and in $A(x)$ are $\{4 i - 2, 4 i - 1\}$. The first point in $A(x)$ in this range for $0 \leq t < 1$ will be $1 + t$ from the first point in $B(y)$ and the second point in $A(x)$ in this range will be $1 - t$. away from the second point in $B(y)$. Therefore the Chamfer distance under translation for the points in $A(x)$ in this range alone is $2$. Therefore $\operatorname{CD}(A(x) + t, B(y)) \geq \max\{1, t\}$.

For property 3, without loss of generality assume $t \geq w$. Therefore the left most point of $A(x)$ is to the right of the right most point of $B(y)$. Therefore all points of $A(x) + t$ have the nearest point in $B(y)$ at $w$. The left most and right most point will be $t - w$ and $4 d + 1 + t - w$ away from this point. For every index $i$, the coordinate gadget will contribute $2 (4 i - 1.5 + (t - w))$ to the Chamfer distance regardless of whether $x_i = 0$ or $x_i = 1$; this is because the coordinate gadget will have two points, with center of mass at $4 i - 1.5$, regardless, and thus the center of mass will be $4 i - 1.5 + (t - w)$ away from the right most point of $B(y)$. Therefore the total Chamfer distance is

\begin{align*}
    \operatorname{CD}(A(x) + t, B(y)) &= t - w + 4 d + 1 + t - w + \sum_{i = 1}^d 2 (4 i - 1.5 + (t - w)) \\
    &= 2 t + 4 d - 2 w + 1 + 4 (d^2 + d) + d (- 3 + 2 (t - w)) \\
    &= t \cdot 2 (d + 1) + 4 d^2 + 5 d + 1 - 2 w - 2 w d \\
    &= t \cdot 2 (d + 1) + 4 d^2 + 5 d + 1 - 2 (4 d + 1) - 2 (4 d + 1) d \\
    &= t \cdot 2 (d + 1) - 4 d^2 - 5 d - 1
\end{align*}

The same argument can be used when $t < - w$, with all the points in $A(x)$ having its nearest neighbor in $B$ be the left most point in $B(y)$ at 0.
\end{proof}

\section{Proof of Lemma \ref{lem:gadget-combination}}
\label{appx:gadget-combination}
\begin{proof}
 Proof by construction, let $U = 10 n \Delta$, where $\Delta$ is the diameter of $\bigcup_{i = 1}^k A_i \cup B_i$. Then define $A = \bigcup_{i = 1}^k A_i + (U \cdot i, 0, ...)$. Likewise define $B = \bigcup_{i = 1}^k  B_i + (U \cdot i, 0, ...)$. We argue that under the optimal translation, all points in $A_i$ must now have their nearest neighbors in $B$ in $B_i$. Note that when $|t| < U / 3$, this is true. Let $|t| \geq U / 3$, and without loss of generality, let $t > 0$. Then the rightmost point (the point with greatest value in dimension 1) in $A_k$ must be at least $t - 2 \Delta$ away from rightmost point in $B$, which means the Chamfer distance is at least $t - 2 \Delta > U / 3 - \delta > 2 n \Delta$. This is guaranteed to not be optimal as for $t = 0$, $\operatorname{CD}(A + t, B) \leq n \Delta$. Therefore $t < U / 3$. A similar argument can be used to argue $t > - U / 3$, except with the left most point in $A_1$ and the left most point in $B$, and therefore $|t| < U / 3$ every point in $A_i$ has its nearest neighbor in $B$ in $B_i$.
\end{proof}

\section{Algorithm 1: Pseudocode}
\label{appx:algo-1-pseudo-code}
\begin{tcolorbox}[title=Exact ChamferDistanceUnderTranslation in \textsc{$\mathbb{R}^1$}]
\begin{algorithmic}
\State Sort $B$ in ascending order
\State Let $T \gets [\,]$
\State Let $I \gets \{\}$
\ForAll{$a \in A$}
    \For{$i \gets 1$ \textbf{to} $|B|$}
        \State Let $t \gets B[i] - a$
        \State Append $t$ to $T$
        \State Update $I[t].n_b \gets I[t].n_b + 1$
    \EndFor
    \For{$i \gets 1$ \textbf{to} $|B| - 1$}
        \State Let $t \gets \tfrac{1}{2}(B[i] + B[i+1]) - a$
        \State Append $t$ to $T$
        \State Update $I[t].n_m \gets I[t].n_m + 1$
    \EndFor
\EndFor
\State Sort $T$ in ascending order
\State Let $d \gets -|A|$
\State Let $\mathrm{cd} \gets \textsc{ChamferDistance}(\{a + T[1] \mid a \in A\},\, B)$
\State Let $\mathrm{cd}^{*} \gets \mathrm{cd}$
\State Let $t_{\mathrm{prev}} \gets T[1]$
\For{$i \gets 1$ \textbf{to} $|T|$}
    \State Let $t \gets T[i]$
    \State Update $\mathrm{cd} \gets \mathrm{cd} + (t - t_{\mathrm{prev}}) \cdot d$
    \State Update $\mathrm{cd}^{*} \gets \min(\mathrm{cd}^{*},\, \mathrm{cd})$
    \State Update $d \gets d + 2\,I[t].n_b - 2\,I[t].n_m$
    \State Update $t_{\mathrm{prev}} \gets t$
\EndFor
\State \Return $\mathrm{cd}^{*}$
\end{algorithmic}
\end{tcolorbox}

\section{Algorithm 1: Extension to Higher Dimensions}
\label{appx:algo-1-extension}
This algorithm does not extend to higher dimensions for general distance metric. The main failure is that the Hessian of the Chamfer distance function parametrized by translation is not zero between points of undifferentiability because of the interaction of the higher dimensions; for example for $\ell_2$, $\frac{\partial}{\partial x} \|(x, y)\|_2$ depends on $y$, and the dependence makes the hessian of the norm function non-zero.

The result is that Chamfer distance for a given translation is not computable naively from the distance and Jacobian for a nearby translation; and more importantly, that stationary points are not necessarily in the level set of a point of undifferentiability. Thus, we do not believe there is an easy set of candidates, which when checked gives an optimal solution. However, we do prove that the set of candidates will provide a 2 approximation, and that the optimal translation will be nearby one of the candidates.

However, for the $\ell_1$ and $\ell_\infty$ distance metrics, the Hessian is zero between points of undifferentiability; both of norms are locally linear, outside of points of undifferentiability. This allows for a $\mathcal{O}(m^{d + 1} n^{d +1})$ algorithm for computing exact Chamfer distance in $\mathbb{R}^d$. Define $A$ and $B$ as aligned in dimension $d_i$ if and only if there exists $a \in A, b \in B$ such that $a_i = b_i$. The set of candidates is the set of all translations such that $A$ and $B$ are aligned in all dimensions. One can prove this statement by fixing all but one dimension and repeating the exact same argument for Chamfer distance under translation in $\mathbb{R}^1$. Therefore for any translation in the other dimensions, this dimension must be aligned, and therefore all dimensions must be aligned.

There are $\mathcal{O}(m n)$ possible alignments in each dimension, which results in $\mathcal{O}(m^d n^d)$ possible candidates. Computing the Chamfer distance of each candidate takes $\mathcal{O}(m n)$ naively, granting a $\mathcal{O}(m^{d + 1} n^{d + 1})$ algorithm. The technique of shaving off the computation time of Chamfer distance is unused because there is not a natural ordering of the candidates and the points of undifferentiability which are local maxima, the mid point equivalents, explodes in number in higher dimensions, which necessitates avoiding iteration over all points of undifferentiability.
\section{Algorithm 2: Approximate Nearest Neighbor Data Structure}
\label{appx:nn-data-structure}
Let $L\geq \min_{b_1,b_2\in B}\frac{\norm{b_1-b_2}}{c\log(n)}$ be a lower bound for computing the nearest neighbor of any point under a candidate translation, and let $U\leq diam(A)+diam(B)$ be an upper bound. 

Initialize a simple standard hash function and place all elements in $B$ in it, which will be used to assess $R_0=0$ distances after translation. Following the standard approach for computing approximate nearest neighbor by approximate near neighbor, for $i=1,...,\lceil \log_c(U/L)\rceil$, let $R_i=c^{i-1}L$ be a scale of the LSH. For each $R_i$, we create a locality-sensitive hashing (LSH) data structure on $B$ as described in \citet{optimal-data-dependent-hashing}, for calculating $(R_i,cR_i)$ approximate near neighbor. Then for each query point $q$, we do binary search over the $\log_c(\frac{U}{L})$ scales to find the smallest scale $R_i$ at which $q$ has an neighbor within $cR_i$. Preprocessing takes time $\mathcal{O}(n^{1+\frac{1}{2c^2-1}})$ and space $\mathcal{O}(n^{1+\frac{1}{2c^2-1}}+dn)$ per scale.  Queries take time $\mathcal{O}(\log\log_c(\frac{U}{L}))$ We drop the $\log_c(\frac{U}{L})$ terms in the headline results because $U$ and $L$ are application-dependent and in many applications this term will be quite small, and because the single-log term $\log_c(\frac{U}{L})$ is only paid once during preprocessing.

In lieu of calculating the lower and upper bounds manually, one can only initialize scales when they are queried by a point during binary search. To save time and space for a lower bound, while initializing every scale one can check if there are two unique points in $B$ that map to the same bucket in the LSH and have distance $\leq cR_i$. If not, this is the lower bound and one can stop. This ensures that, for every translated $a$, there is at most one unique point in $B$ within $cR_1=cL$ of it; its distance is then computed exactly.
\section{Algorithm 2: Runtimes and Correctness}
\begin{lemma}
    Algorithm 2 variant 1 yields a $(2+\eps)$-approximation in time $\mathcal{O}(\frac{1}{\eps^3}(m+n)nd\log(n)\log(\frac{mn}{\eps}))$ with probability $\geq 0.9$
\end{lemma}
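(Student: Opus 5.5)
We split the argument into a sampling event, a subroutine-accuracy event, and a running-time count, and combine them with a union bound. Write $\eps' = \eps/4$ for the subroutine accuracy.

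\textbf{Sampling step.} Sample $k = \lceil \frac{2}{\eps}\ln 20\rceil = \mathcal{O}(1/\eps)$ points $a_1,\dots,a_k$ uniformly and independently from $A$. Call $a\in A$ \emph{good} if
\[\|(b_{a+t^*}-a)-t^*\|\le (1+\eps)\frac{\mathrm{OPT}}{m}.\]
The Markov argument in Lemma~\ref{lem:closest-pair-upper-bound} shows that a uniform sample is good with probability at least $\eps/(1+\eps)\ge \eps/2$. This bound holds directly, so we do not need the floored count. The probability that no sample is good is therefore at most
\[(1-\eps/2)^k \le e^{-\eps k/2}\le 0.05.\]
On the complementary event, item 3 of Lemma~\ref{lem:appx-chamfer-translations} applied to a good $a_i$ gives
\[\mathrm{OPT}\le \min_{i,b}\op{CD}(A+b-a_i,B)\le (2+\eps)\mathrm{OPT}.\]
The lower bound $\mathrm{OPT}\le \op{CD}(A+t,B)$ holds for every $t$ by definition.

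\textbf{Estimation step.} For each of the $kn$ candidate translations $t=b-a_i$, we estimate $\op{CD}(A+t,B)$ with the subroutine of \citet{near-linear-time-algo-chamfer-distance}. A single run gives a $(1\pm\eps')$-estimate with probability $0.99$. Taking the median of $r=\mathcal{O}(\log(kn))=\mathcal{O}(\log\frac{n}{\eps})$ independent runs fails with probability at most $0.05/(kn)$, by a Chernoff bound on the number of runs falling outside the interval. A union bound over all $kn$ candidates shows that every estimate is accurate simultaneously with probability at least $0.95$. Combining this with the sampling event, both hold with probability at least $0.9$.

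\textbf{Combining the two events.} On both events, the returned minimum estimate $\hat v$ satisfies
\[(1-\eps')\mathrm{OPT}\le \hat v\le (1+\eps')(2+\eps)\mathrm{OPT}.\]
One can either output $\hat v/(1-\eps')$, or rescale $\eps$ by a constant. Either way the ratio $(1+\eps/4)(2+\eps/2)/(1-\eps/4)\le 2+\mathcal{O}(\eps)$ for $\eps<1$, so running the argument with $\eps/c_0$ for a suitable constant $c_0$ yields a $(2+\eps)$-approximation.

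\textbf{Running time.} There are $kn$ candidates and $r$ repetitions each. Each repetition costs $\mathcal{O}(\eps^{-2}(m+n)d\log n)$, since the subroutine is near-linear in $|A|+|B|$. Multiplying out gives
\[\mathcal{O}\!\left(\tfrac{1}{\eps^3}(m+n)nd\log(n)\log\tfrac{mn}{\eps}\right).\]
Forming each translated copy $A+t$ costs only $\mathcal{O}(md)$, which this bound absorbs.

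\textbf{Main obstacle.} The hardest step is the bookkeeping. We must confirm that the subroutine's guarantee is two-sided multiplicative, so that the median trick applies. We must also choose the constants so that the product of the $(1\pm\eps')$ factors and the $(2+\eps)$ factor stays below $2+\eps$ after rescaling $\eps$.
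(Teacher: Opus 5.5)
Your proposal is correct and follows essentially the same route as the paper: sample $\mathcal{O}(1/\eps)$ anchors so that, with probability at least $0.95$, some candidate $b-a_i$ achieves $(2+\mathcal{O}(\eps))\mathrm{OPT}$ by Lemma~\ref{lem:appx-chamfer-translations}; boost the near-linear Chamfer estimator by a median over logarithmically many runs, with a union bound over all $kn$ candidates; divide by $1-\eps'$ to avoid underestimation; and union-bound the two $0.05$ failure probabilities. The differences are only bookkeeping: the paper fixes its constants up front (sampling with $\eps/4$, subroutine accuracy $\eps/16$) instead of rescaling $\eps$ at the end, and your direct use of the per-sample Markov probability $\eps/(1+\eps)$ is a slightly cleaner route to the $(1-\eps/2)^k$ bound than the floored count $\lfloor m\eps/2\rfloor$ (also, your final ratio should have $2+\eps$ rather than $2+\eps/2$, which does not affect the conclusion).
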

\begin{proof}
    We sample $k=\lceil \frac{24}{\eps}\rceil$ points $a_i\sim U[A]$ so that a translation $\tilde{t}=b_j-a_i$ yields $\op{CD}(A+\tilde{t},B)\leq (2+\frac{\eps}{4})\mathrm{OPT}$ with probability 
    $\geq 95$\%,
    \[Pr[\text{no sampled $\tilde{t}$ has $\op{CD}(A+\tilde{t},B)\in [\mathrm{OPT},(1+\frac{\eps}{4})]$}]\leq (1-\frac{\eps}{8})^k\quad \text{By Lemma \ref{lem:appx-chamfer-translations}}.\]
    For $k=k'(\frac{8}{\eps})$, this is $\leq e^{-k'}$. We want this to be $\leq 0.05$, which yields $k'\geq -\ln(0.05)$, so $k'=3$ suffices.
    
    For each translation tested, we run the algorithm from \citet{near-linear-time-algo-chamfer-distance} (setting the $\eps$ in that algorithm to our $\eps/16=\mathcal{O}(\eps)$) $l$ times and then take their median, scaled by $\frac{1}{1-\frac{\eps}{16}}$ so that it is not an underestimate. Each run of that algorithm gives an estimate in the range $[\op{CD}(A+t,B),(\frac{1+\frac{\eps}{16}}{1-\frac{\eps}{16}})\mathrm{CD(A+t,B)}]\subset [\op{CD}(A+t,B),({1+\frac{\eps}{4}})\mathrm{CD(A+t,B)}]$ with probability $\geq 0.99$. We set $l$ so that the probability that a given median estimate falls outside the range, call that event $F$, is $\leq \frac{0.05}{kn}\leq  \frac{0.05\eps}{48mn}$. Using the standard bound on the probability that the median fails, we have
    $\Pr[F]\leq (2\sqrt{0.01(0.99)})^l\leq (0.02)^l$
    We want this to be $\leq \frac{0.05\eps}{48mn}$, which yield $l\geq \frac{\log(\frac{0.05\eps}{12mn})}{\log(0.02)}=\mathcal{O}(\log(\frac{mn}{\eps}))$. This ensures that every translation tested yields an approximation $\hat{\op{CD}}(A+t,B)$ with $\op{CD}(A+t,B)\leq \hat{\op{CD}}(A+t,B)\leq (1+\frac{\eps}{4}){\op{CD}}(A+t,B)$ with probability $\geq 0.95$. Thus with probability $\geq 0.9$, the minimal approximate Chamfer distance among the translations yields a $(1+\frac{\eps}{4})(2+\frac{\eps}{4})\leq (2+\eps)$- approximation of the true minimum Chamfer distance.\\
    Thus there are $kln=\mathcal{O}(\frac{\log(\frac{mn}{\eps})}{\eps}n)$ translations tested, each of which takes time $\mathcal{O}(\frac{1}{\eps^2}(m+n)d)$, for a $\mathcal{O}(\frac{1}{\eps^3}(m+n)nd\log(n)\log(\frac{mn}{\eps}))$ total time algorithm.
\end{proof}
\begin{lemma}
Algorithm 2 variant 2 yields a $(2+\eps)c$-approximation of $\mathrm{OPT}$ in time $\mathcal{O}(\frac{1}{\eps}mn^{1+\frac{1}{2c^2-1}}d\log(n))$
\begin{proof}
Each $\Delta_i$ is a $c$ approximation of the distance from $a_i$ to its nearest neighbor $b_{a_i+t}$, and never an underestimate. $\sum_{i\in [m]}\Delta_i$ is therefore always a $c$ approximation of the Chamfer distance at that translation. By Lemma \ref{lem:appx-chamfer-translations}, the minimum such approximation over all n translations over $k=\lceil\frac{6}{\eps}\rceil$ points in $A$ is a $(2+\eps)c$ approximation for the Chamfer distance under translation with probability $\geq$ 90\%,
\[Pr[\text{No sampled translations give $(2+\eps)$ approximation for OPT}]\leq (1-\frac{\eps}{2})^k.\]
For $k=k'(\frac{2}{\eps})$, this is $\leq e^{-k'}$. Setting this $\leq 0.1$ gives $k'\geq -\ln(0.1)$. $k'=3$ suffices.
\\
This gives $\mathcal{O}(\frac{n}{\eps})$ total translations tested, each of which takes time $\mathcal{O}(mn^{\frac{1}{2c^2-1}}d\log(n))$ after preprocessing, for running time $\mathcal{O}(\frac{1}{\eps}mn^{1+\frac{1}{2c^2-1}}d\log(n))$. Preprocessing takes time $\mathcal{O}(n^{\frac{1}{2c^2-1}}d\log(n))$ to build the LSH, see Appendix \ref{appx:nn-data-structure}

\end{proof}

\end{lemma}

\section{Algorithm 3: Correctness}
\label{appx:proof-of-local-net-theorem}
\begin{proof}
Let \(Q\) be a uniformly random subset of \(A\) of size \(
k=\left\lceil \frac{2}{\gamma}\log\frac{1}{\delta}\right\rceil\), and \(T_{samp} =  \{b_i-a_j : a_j\in Q,\ b_i\in B\}\).
Let \(G_\gamma\subseteq A\) be the good set from Lemma~\ref{lem:closest-pair-upper-bound}, so
\[|G_\gamma|\ge \frac{\gamma}{1+\gamma}m\ge \frac{\gamma m}{2}.\]
Then,
\begin{align*}
\Pr\left[
\exists t_c \in T_{\mathrm{samp}}
:
\|t_c-t^*\|
\le
(1+\gamma)\frac{\mathrm{OPT}}{m}
\right] &\ge
\Pr[Q\cap G_\gamma\neq \emptyset] \\
&=
1-\frac{\binom{m-|G_\gamma|}{k}}{\binom{m}{k}} \\
&\ge
1-\left(1-\frac{|G_\gamma|}{m}\right)^k\\ &\ge
1-e^{-k\gamma/2}.
\end{align*}
So, with $k=\left\lceil \frac{2}{\gamma}\log\frac{1}{\delta}\right\rceil$, we get \[\Pr\left[\exists t_c \in T_{\mathrm{samp}}: \|t_c-t^*\| \le (1+\gamma)\frac{\mathrm{OPT}}{m} \right] \ge 1-\delta.\]

Let \(u:=\min_{t_c\in T_{\mathrm{samp}}}\operatorname{CD}(A+t_c,B)\). On the successful sampling event, Lemma \ref{lem:appx-chamfer-translations} gives
\[
    \mathrm{OPT}\le u\le (2+\gamma)\mathrm{OPT}.
\]
Set \(R=(1+\gamma)u/m\) and \(\rho=\eps u/(hm)\) for \(h\ge 2+\gamma\). Since the successful event gives a candidate \(t_c\in T_{\mathrm{samp}}\) with
\[
\|t_c-t^*\|\le (1+\gamma)\mathrm{OPT}/m\le R,
\]
we have \(t^*\in B(t_c,R)\), and the \(\rho\)-net $N_{t_c}$ of this ball contains some point \(s\in N_{t_c}\) with $\|s-t^*\|\le \rho.$
Therefore,
\begin{align*}
\operatorname{CD}(A+s,B)
&\le \mathrm{OPT}+m\rho \\&= \mathrm{OPT}+\frac{\eps u}{h} \\&\le \left(1+\frac{(2+\gamma)\eps}{h}\right)\mathrm{OPT} \\&\le (1+\eps)\mathrm{OPT}.
\end{align*}
Since the algorithm minimizes over all net points, it achieves cost at most that of \(s\), and hence obtains the desired \((1+\eps)\)-approximation.
It remains to bound the running time. We have
\[
|T_{\mathrm{samp}}|
=
nk
=
O\left(\frac{n}{\gamma}\log\frac1\delta\right),
\qquad
\left(\frac{R}{\rho}\right)^d
=
\left(\frac{h(1+\gamma)}{\eps}\right)^d
=
\mathcal{O}(\eps^{-d}).
\]
Thus the algorithm evaluates $O\left(\frac{n}{\gamma}\eps^{-d}\log\frac1\delta\right)$
total net points. Since each exact Chamfer-distance evaluation takes \(\mathcal{O}(mn)\) time,
the total running time is $O\left(mn^2\gamma^{-1}\eps^{-d}\log\frac1\delta\right).$
Taking \(\gamma=\eps\) and constant \(\delta\) gives
\[
\mathcal{O}(mn^2\eps^{-(d+1)}).
\]
\end{proof}
\section{Algorithm 3: Instance Sensitive Refinement}
\label{appx:algo-3-instance-sensitive-refinement}
Note that if the candidate translations are well clustered in translation space, the dependence on the number of sampled candidates can be significantly reduced. Let $T_{samp}$ be the set of sampled translations used by the local net algorithm. The basic algorithm constructs a separate local net around each candidate in $T_{samp}$, giving the number of total net points to be proportional to $|T_{samp}|\eps^{-d}$. However, it suffices to construct a single net over the union of local search balls around all candidates. 
This leaves the approximation guarantee unchanged.

In the worst case, when the balls are essentially disjoint, this recovers the original $|T_{samp}|\eps^{-d}$ bound. But when the sampled translations are closer, the union can be much smaller. For example, suppose that the sampled translations lie in a region whose $R$-covering number is $K$. Then the number of net points needed for the union of the search balls is on the order of $K\eps^{-d}$ rather than $|T_{samp}|\eps^{-d}$. This improves the bound when $K\ll|T_{samp}|$. Thus, the dependence on $|T_{samp}|$ is replaced by an instance-sensitive covering number. This can be much smaller when many sampled translations are similar, for example when many points in $A$ induce similar offsets to points in $B$. This is precisely the regime in which $A$ and $
B$ are similar up to translation: many sampled offsets $b-a$ concentrate in the same region of translation space, causing their local search balls to overlap.

\section{Algorithm 4: Runtime}
\label{appx:algo-4-runtime}
\begin{remark}
    Algorithm 4 runs in time $\mathcal{O}(mnd(n^{\frac{1}{2c^2-1}}\log(n)+\log^{3}(\frac{mR}{\eps})))=\tilde{\mathcal{O}}(mn^{1+\frac{1}{2c^2-1}}d)$ 
\end{remark}
\begin{proof}
    Building the approximate nearest neighbor data structure takes time $\mathcal{O}(n^{1+\frac{1}{2c^2-1}}\log(n)d)$. Querying it takes time $\mathcal{O}(mn^{\frac{1}{2c^2-1}}\log(n)d)$ time for each of the $\mathcal{O}(n)$ translations, for $\mathcal{O}(mn^{1+\frac{1}{2c^2-1}}\log(n)d)$ time total. For each translation, upon retrieving the $m$ nearest neighbors the median subroutine takes time $\mathcal{O}(m\log^3(\frac{mR}{\eps})d)$, along with $\mathcal{O}(md)$ time for computing the total distance to the approximate median, leading to total time $\mathcal{O}(mn\log^3(\frac{mR}{\eps})d)$ for computing the medians and total distances for all translations.
\end{proof}
\section{Algorithm 4: Further Assumption Leading to Earth Mover's Distance}
\label{appx:further-assumption-emd}
For any two point sets $A, B \subset \mathbb{R}^d$ where $|A| = m<|B|=n$, the asymmetric Earth Mover's Distance is defined as $\operatorname{EMD}(A,B) = \min_{f: A \rightarrow B\text{ injective} }\sum_{a\in A}d(a,f(a))$ \cite{a-near-linear-time-approx-emd}. Asymmetric Earth Mover's Distance under translation is analogously defined as $\operatorname{EMDuT}(A,B) = \min_{t\in \mathbb{R}^d,f:A\rightarrow B \text{ injective}}\sum_{a\in A}d(a+t,f(a))$
\begin{lemma}[Extension to Earth Mover's Distance]
Given the assumptions of Algorithm 4, if we further assume that $\forall a_i\neq a_j\in A,\norm{a_i-a_j}>R(1+\eps)$, then \[\operatorname{EMDuT}(A,B)\leq R\iff \operatorname{CDuT}(A,B)\leq R\] and \[\operatorname{EMDuT}(A,B)> R(1+\eps)\iff \operatorname{CDuT}(A,B)> R(1+\eps).\]
\begin{proof}
Suppose that $\op{CDuT} \leq R(1+\eps)$ with some optimal translation $t^*$. Then
\[
\begin{aligned}
\forall i\neq j,R(1+\eps)
&\geq \norm{a_i + t^* - b_{a_i + t^*}} + \norm{a_j + t^* - b_{a_j + t^*}} \\
&\geq \norm{(a_i + t^* - b_{a_i + t^*}) - (a_j + t^* - b_{a_j + t^*})} \\
&= \norm{(a_i - a_j) - (b_{a_i + t^*} - b_{a_j + t^*})} \\
&\geq \norm{a_i - a_j} - \norm{b_{a_i + t^*} - b_{a_j + t^*}}.
\end{aligned}
\]
\[
\implies 
\norm{b_{a_i + t^*} - b_{a_j + t^*}} \;\geq\; \norm{a_i - a_j} - R(1+\eps) \;>\; 0.
\]
Thus all $b_{a_i+t^*}$ are distinct. When all $b$ are distinct in an assignment of points from $A$ to $B$ under Chamfer distance, then this is also a valid point assignment for Earth Mover's Distance (EMD), with the same value. Thus $\op{CDuT}(A,B)\leq R(1+\eps)\implies \op{CDuT}(A,B)\geq \op{EMDuT}(A,B)$. Since Chamfer distance under translation is also a lower bound for EMD under translation, we have $\op{CDuT}(A,B)\leq R \iff \op{EMDuT}(A,B)\leq R$ and $\op{CDuT}(A,B)> R(1+\eps) \iff \op{EMDuT}(A,B)> R(1+\eps)$
\end{proof}
\end{lemma}
This means that under this additional separation assumption on $A$, algorithm 4 is a valid algorithm for the decision version of $\op{EMDuT}$ as well. Given the separation assumption on $B$, this is a natural assumption to make on $A$ as well.
\end{document}